\newtheorem{definition}{Definition}
\newtheorem{theorem}{Theorem}
\newtheorem{proposition}{Proposition}
\newtheorem{example}{Example}
\newcommand{\myhrule}{\rule[.5pt]{\hsize}{.5pt}}
\newcommand{\eat}[1]{}
\newcommand{\stab}{\rule{0pt}{8pt}\\[-1.6ex]}
\newcommand{\sstab}{\rule{0pt}{8pt}\\[-2.4ex]}
\newcommand{\bi}{\begin{itemize}}
\newcommand{\ei}{\end{itemize}}
\newcommand{\mat}[2]{{\begin{tabbing}\hspace{#1}\=\+\kill #2\end{tabbing}}}
\newcommand{\be}{\begin{enumerate}}
\newcommand{\ee}{\end{enumerate}}
\newcommand{\beqn}{\begin{eqnarray*}}
\newcommand{\eeqn}{\end{eqnarray*}}
\newcommand{\stitle}[1]{\vspace{1.5ex}\noindent{\bf #1}}
\newcommand{\etitle}[1]{\vspace{0.8ex}\noindent{\em #1}}
\newcommand{\ie}{\emph{i.e.,}\xspace}
\newcommand{\eg}{\emph{e.g.,}\xspace}
\newcommand{\wrt}{\emph{w.r.t.}\xspace}
\newcommand{\If}{\mbox{\bf if}\ }
\newcommand{\Then}{\mbox{\bf then}\ }
\newcommand{\Do}{\mbox{\bf do}\ }
\newcommand{\For}{\mbox{\bf for}\ }
\newcommand{\Each}{\mbox{\bf each}\ }
\newcommand{\Return}{\mbox{\bf return}\ }
\newcommand{\kw}[1]{{\ensuremath {\mathsf{#1}}}\xspace}
\newcounter{ccc}
\newcommand{\bcc}{\setcounter{ccc}{1}\theccc.}
\newcommand{\icc}{\addtocounter{ccc}{1}\theccc.}
\newcommand{\NP}{{\sc np}\xspace}
\newcommand{\APX}{{\sc apx}\xspace}
\newcommand{\apx}{{\sc apx}\xspace}
\newcommand{\PTIME}{{\sc ptime}\xspace}
\newcommand{\DAG}{\kw{DAG} \xspace}
\renewcommand{\dag}{\kw{DAG} \xspace}
\renewcommand{\texttt}[1]{{\small\textsf{#1}}}
\newcommand{\len}{\texttt{len}}
\definecolor{gray}{rgb}{0.5,0.5,0.5}
\newcommand{\AFPR}{\kw{AFP}-\kw{reduction}}
\newcommand{\dist}{\texttt{dist}}
\begin{document}

\title{\vspace{-1ex}Inferring the Underlying Structure of Information Cascades\vspace{-1ex}} 

\author{Bo Zong}
\author{Yinghui Wu}
\author{Ambuj K. Singh}
\author{Xifeng Yan}
\affil{Department of Computer Science \\
University of California at Santa Barbara \\
Santa Barbara, CA 93106-5110, USA \\
\{bzong, yinghui, ambuj, xyan\}@cs.ucsb.edu}

\maketitle

\vspace{-2ex}

\begin{abstract}
In social networks, information and influence diffuse among users as cascades.
While the importance of studying cascades has been recognized
in various applications, it is difficult to
observe the complete structure of cascades in practice.
Moreover, much less is known on how to infer cascades
based on partial observations.
In this paper we study the cascade inference problem
following the independent cascade model,
and provide a full treatment from complexity to algorithms:
(a) We propose the idea of consistent trees as the inferred structures for
cascades; these trees connect source nodes and observed nodes with
paths satisfying the constraints from the observed temporal
information.
(b) We introduce metrics to measure the likelihood of
consistent trees as inferred cascades, as well as
several optimization problems for finding them.
(c) We show that the decision problems for
consistent trees are in general \NP-complete,
and that the optimization problems are hard to approximate.
(d) We provide approximation algorithms
with {\em performance guarantees} on the quality of the
inferred cascades, as well as heuristics.
We experimentally verify the efficiency and
effectiveness of our inference algorithms, using real and synthetic data.
\end{abstract}

\begin{IEEEkeywords}
information diffusion; cascade inference
\end{IEEEkeywords}

\section{Introduction}
\label{sec:intro}

In various real-life networks,
users frequently
exchange information and influence each other. The information (\eg messages, articles,
recommendation links) is typically created from a user and spreads via links among users, leaving
a trace of its propagation.
Such traces are typically represented as trees, namely, {\em information cascades},
where (a) each node in a cascade is associated with the time step at which it receives the information,
and (b) an edge from a node to another indicates that a user propagates the information to and 
{\em influences} its neighbor~\cite{bikhchandani1992theory,GOLDENBERG}.

A comprehensive understanding and analysis of cascades benefit various emerging
applications in social networks~\cite{ChaBAG12, KEMPE_KDD}, viral marketing~\cite{arthur2009pricing, Domingos:2001a:KDD01,Richardson:2002A:KDD02}, and
recommendation networks~\cite{leskovec2006patterns}. 
In order to model the propagation of information, various
{\em cascade} models have been developed~\cite{dave2011modelling,Song:2007a:WWW,2011:Wang:CoRR}.  
Among the most widely used models is the \emph{independent cascade model}~\cite{KEMPE_KDD},
where each node has only one chance to influence its inactive neighbors, and
each node is influenced by at most one of its neighbors independently. 
%
%
Nevertheless, it is typically
difficult to observe the entire cascade in practice, due to the noisy graphs with missing data, or
data privacy policies~\cite{kossinets2006effects, Sadikov:2011a:WSDM11}. It is important to
develop techniques that can {\em infer} the cascades using partial information.
Consider the following example.

\begin{figure}[tb!]
\centerline{\includegraphics[scale=0.5]{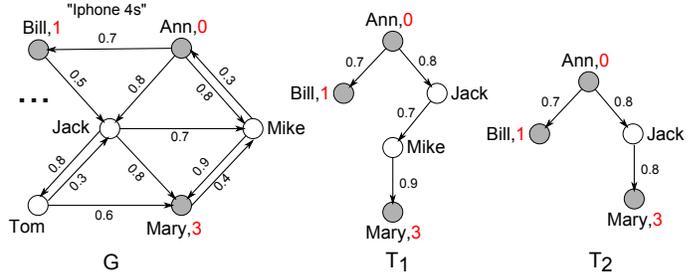}}
\caption{A cascade of an Ad (partially observed) in a social network $G$ from user \texttt{Ann}, and
its two possible tree representations $T_1$ and $T_2$.}
\label{fig-exa-cascade}
\vspace{-3ex}
\end{figure}

\eat{
In this paper, we study the {\color{blue} cascade prediction problem (temporary name)}
on \emph{independent cascade model}~\cite{}.
The reason why we focus on these two models is two-fold.
First, both of them are widely studied probabilistic models on
cascading behavior~\cite{}; second, it is easier to proceed the discussion,
since cascades based on these two models unfold with tree-like structures,
and diffuse in a progressive way where nodes being active cannot be switched
back to being inactive {\color{red} (currently, I do not know if we can lift this assumption)}.
}

\begin{example}
\label{exa-cascade}
The graph $G$ in Fig.~\ref{fig-exa-cascade} depicts a fraction of a social network (\eg Twitter),
where each node is a user, and each edge represents an information exchange.
For example, edge $(\texttt{Ann}, \texttt{Bill})$ with a weight $0.7$
represents that a user \texttt{Ann} sends an advertisement (Ad) about a released product (\eg ``Iphone 4s'')  with probability $0.7$.
 To identify the impact of an Ad strategy, a company would like to
know the complete cascade starting from their agent \texttt{Ann}.
Due to 
data privacy policies, the observed information 
may be limited:
(a) at time step $0$, \texttt{Ann} posts an Ad about ``Iphone 4s'';
(b) \textbf{at} time step $1$, \texttt{Bill} is \textbf{influenced} by \texttt{Ann} and retweets the Ad;
(c) \textbf{by} time step $3$, the Ad reaches \texttt{Mary}, and \texttt{Mary} retweets it.
As seen,  
the information diffuses from one user to his or her neighbors with different probabilities, 
represented by the weighted edges in $G$. 
Note that the cascade unfolds as a \textbf{tree},
rooted at the node \texttt{Ann}. 


To capture the entire topological information of the cascades, 
we need to make inferences in the graph-time domain. Given the above partially observed information,
two such inferred cascades are shown as trees $T_1$ and $T_2$ in
 Fig.~\ref{fig-exa-cascade}. $T_1$ illustrates a cascade where
 each path from the source \texttt{Ann} to each observed node
 has a length that exactly equals to the time step, at which the observed node is influenced,
 while $T_2$ illustrates a
 cascade where any path in $T_2$ from \texttt{Ann}
 to an observed node has a length no greater than
 the observed time step when the node is influenced, due to possible delay in observation,
 \eg~\texttt{Mary} is known to be influenced by (instead of exactly at) time step $3$. 
 The inferred cascades provide useful information about the missing
 links and users that are important in the propagation of the information.
\eat{
Motivation example.
Show (1) cascade may be incomplete,
(2) the missing part is important,
(3) cascade inference, and motivate the consistent tree
(4) motivate independent cascade model.
}
\end{example}

The above example highlights the need to make reasonable inference about the cascades, according to
only the partial observations of influenced nodes and the time at or by which they are influenced.
Although cascade models and a set of related problems, \eg influence maximization,
have been widely studied,
much less is known on how to infer the cascade structures,
including complexity bounds and approximation algorithms.

\stitle{Contributions}. We investigate the cascade inference problem,
where cascades follow the widely used {\em independent cascade model}.
To the best of our knowledge, this is
the first work towards inferring cascades as {\em general trees} following
independent cascade model, based on the partial observations.

\stab
(a) We introduce the notions of {\em (perfect and bounded) consistent trees} in Section~\ref{sec:preliminary}.
These notions capture the inferred cascades by incorporating connectivity
and time constraints in the partial observations.
To provide a quantitative measure of the quality of
inferred cascades, we also introduce two metrics in Section~\ref{sec:preliminary}, based on
(i) the size of the consistent trees, and
(ii) the likelihood when a diffusion function of
the network graph is taken into account, respectively.
These metrics give rise to two optimization problems,
referred to as the {\em minimum consistent tree} problem
and {\em minimum weighted consistent tree} problem.

\stab
(b) We investigate the problems of identifying
perfect and bounded consistent trees, for given partial observations, in Section~\ref{perfect-infer}
and Section~\ref{bounded-infer}, respectively. These problems are variants of the inference problem.


\stab
(i) We show that these problems are all \NP-complete.
Worse still, the optimization problems are hard to
approximate: unless {\sc p} = \NP, it is not possible to
approximate the problems within any {\em constant} ratio.

\stab
(ii) Nevertheless, we provide approximation and heuristic algorithms
for these problems. 
For bounded trees, the problems are $O(|X|*\frac{\log f_{min}}{\log f_{max}})$-approximable, where $|X|$ is the size of the partial observation, and
$f_{min}$ (resp. $f_{max}$) are the minimum (resp. maximum) probability on the graph edges.
We provide such polynomial approximation algorithms.
For perfect trees, we show that it is already
\NP-hard to even find a feasible solution. However, we
provide an efficient heuristics using a greedy strategy.
Finally, we address a practical special case for perfect tree problems, which
are $O(d*\frac{\log f_{min}}{\log f_{max}})$-approximable, where $d$ is the diameter of the graph, which is typically small in practice.

\stab
(c) We experimentally verify the effectiveness
and the efficiency of our algorithms in Section~\ref{experiment},
using real-life data and synthetic data.
We show that our inference algorithms
can efficiently infer cascades with
satisfactory accuracy.

\stitle{Related work}. We categorize related work as follows.

\etitle{Cascade Models}.
To capture the behavior of cascades, a variety of cascade models have been proposed~\cite{Bailey:1975a,Goldenberg:2001a, Granovetter:1978a,ICALP:Kempe:2005a,Kermack:1927a},
such as 
\emph{Suscepctible/Infected~(SI) model}~\cite{Bailey:1975a},
\emph{decreasing cascade model}~\cite{ICALP:Kempe:2005a}, \emph{triggering model}~\cite{KEMPE_KDD},
\emph{Shortest Path Model}~\cite{kimura2006tractable}, 
and the {\em Susceptible/Infected/Recover~(SIR) model}~\cite{Kermack:1927a}.
In this paper,
we assume that the cascades follow the \emph{independent cascade model}~\cite{Goldenberg:2001a},
which is one of the most widely studied models (the shortest path model~\cite{kimura2006tractable} is one of its special cases).


\etitle{Cascade Prediction}.
There has been recent work on cascade prediction and
inference, with the emphasis on global properties (\eg cascade nodes, width, size)~\cite{Budak:2011a:WWW11,2011:Fei:CIKM,kimura2009finding,LappasTGM10,Sadikov:2011a:WSDM11,Song:2007a:WWW,2011:Wang:CoRR}
with the assumption of missing data and partial observations. The problem of identifying and ranking influenced nodes is addressed in~\cite{kimura2009finding,LappasTGM10},
but the topological inference of the cascades is not considered.
Wang et al.~\cite{2011:Wang:CoRR} proposed a \emph{diffusive logistic} model to capture the evolution of the density of active
 users at a given distance over time, and demonstrated the prediction ability of this model.
 Nevertheless, the structural information about the cascade is not addressed. 
 Song et al.~\cite{Song:2007a:WWW} studied the probability of a user being influenced by a given source.
In contrast, we consider a more general inference problem where there are multiple observed users,
who are influenced at different time steps from the source.
Fei et al.~\cite{2011:Fei:CIKM} studied social behavior prediction and the effect of information content.
In particular, their goal is to predict actions on an article
based on the training dataset. 
Budak et al.~\cite{Budak:2011a:WWW11} investigated the optimization problem of minimizing the number of
the possible influencing nodes following a specified cascade model, instead of predicting
cascades based on partial observations.
\eat{
also considered incomplete observation for influence limitation
problems. In their work, the size of the cascade is fixed and a fixed number of nodes are predicted to
be active so that the likelihood of the observation is maximized, whereas, in our work, we consider a
more general problem. 
}

All the above works focus on predicting the nodes and their behavior in the cascades. In contrast,
we propose approaches to infer both the nodes and the topology of the cascades in the graph-time domain.

\etitle{Network Inference}.
Another host of work study network inference problem, which focuses on inferring network structures from observed cascades
over the unknown network, instead of inferring cascade structures as trees~\cite{dne,RodriguezLK12}.
Manuel et al. ~\cite{RodriguezLK12} proposes techniques to
infer the structure of a network where the cascades flow, based on the observation over
the time each node is affected by a cascade.
Similar network inference problem is addressed in~\cite{dne}, where the cascades are modeled as
(Markov random walk) networks. The main difference between our work and theirs
is (a)
we use consistent trees to describe possible cascades allowing partial observations;
(b) we focus on inferring the structure of cascades as trees instead of the backbone networks. 

Closer to our work is the work by 
Sadikov et al.~\cite{Sadikov:2011a:WSDM11} that consider 
the prediction of
the cascades modeled as $k$-trees, a balanced tree
model. The global properties of cascades such as size and
depth are predicted based on the incomplete cascade. In contrast to their work, (a) we model
cascades as general trees instead of $k$-balanced trees, (b) while Sadikov et al.~\cite{Sadikov:2011a:WSDM11}
assume the partial cascade is also a $k$-tree and predict only the properties of the
original cascade, we infer the nodes as well as topology
of the cascades only from a set of nodes and their activation time, using much less
available information. (c) The temporal information (\eg time steps) in the partial observations
is not considered in~\cite{Sadikov:2011a:WSDM11}.

\eat{
\etitle{Steiner tree problems}.
~\cite{voss1999steiner,costa2008fast,kantor2006approximate}

Other related.
~\cite{gomez2010inferring};
~\cite{LappasTGM10};
~\cite{Galuba10};
~\cite{dne};
}

\newcommand{\mst}{\kw{MST}}
\newcommand{\mpct}{\kw{PCT_{min}}}
\newcommand{\cmpmpct}{\kw{PCT}}
\newcommand{\mwpct}{\kw{PCT_w}}
\newcommand{\cmpwpct}{\kw{WPCT}}
\newcommand{\cmpwpctsp}{\kw{WPCT_{sp}}}
\renewcommand{\sp}{\kw{sp}}
\newcommand{\lsearch}{\kw{PCT_l}}
\newcommand{\mwdst}{\kw{DST}}
\newcommand{\conbct}{\kw{CT}}
\newcommand{\mbct}{\kw{BCT_{min}}}
\newcommand{\cmpmbct}{\kw{BCT}}
\newcommand{\mwbct}{\kw{BCT_w}}
\newcommand{\cmpwbct}{\kw{WBCT}}
\newcommand{\bfs}{\kw{BFS} \xspace}

\section{Consistent Trees}
\label{sec:preliminary}

We start by introducing several notions. 



\stitle{Diffusion graph}. We denote a social network
as a {\em directed} graph $G = (V, E, f)$, where (a)
$V$ is a finite set of nodes, and each node $u \in V$ denotes a user;
(b) $E \subseteq V \times V$ is a finite set of edges, where each edge
$(u, v) \in E$ denotes a social connection via which the information may diffuse from $u$ to $v$;
and (c) a {\em diffusion function} $f: E \rightarrow R^+$ which assigns for each edge $(u,v) \in E$ a value
$f(u,v) \in [0,1]$, as the probability that node $u$ {\em influences} $v$.

\eat{
We shall use the following notations.
(1) A {\em path} $\rho$ in graph $G$ is a sequence of nodes $v_1/
\ldots/ v_n$ such that $(v_i, v_{i+1})$ is an edge in $G$, and $v_i$ = $v_j$ if and only if
$i$ = $j$, for each $i,j\in[1, n-1]$.
(2) The {\em length} of the path $\rho$, denoted by $\len(\rho)$,
is $n-1$, \ie it is the number of edges in $\rho$.
we refer to $v_2$ as a {\em child} of $v_1$
(or $v_1$ as a {\em parent} of $v_2$),
and $v_i$ as a {\em descendant} of $v_1$ for $i \in [2, n]$.
(3) For $u, v \in V$, we say $u$ reaches $v$ by $h$ hops if there exists a path $\rho$ from $u$ to $v$,
such that $\len(\rho) \leq h$.
(4) The {\em distance} from $u$ to $v$,
denoted as $\dist(u,v)$, is the length of the shortest path from $u$ to $v$. 
}

\begin{figure}[tb!]
\centerline{\includegraphics[scale=0.5]{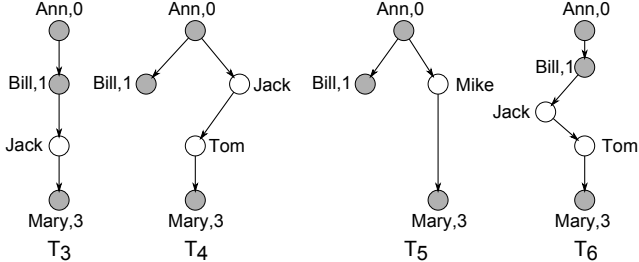}}
\caption{Tree representations of a partial observation $X$ = $\{(\texttt{Ann}, 0)$, $(\texttt{Bill}, 1)$, $(\texttt{Mary}, 3)\}$: $T_3$, $T_4$ and $T_5$ are consistent Trees, while $T_6$ is not.}
\label{exa-contree}
\vspace{-4ex}
\end{figure}


\stitle{Cascades}.
We first review the {\em independent cascade model}~\cite{KEMPE_KDD}.
We say an information propagates over a graph $G$ following the {\em independent cascade model} if
(a) at any time step, each node in $G$ is exactly one of the three states $\{$\emph{active}, \emph{newly active}, \emph{inactive}$\}$;
(b) a cascade starts from a {\em source node} $s$ being \emph{newly active} at time step $0$;
(c) a \emph{newly active} node $u$ at time step $t$ has only one chance to influence its \emph{inactive} neighbors,
such that at time $t+1$, (i) if $v$ is an inactive neighbor of $u$, $v$ becomes \emph{newly active} with probability $f(u, v)$; and
(ii) the state of $u$ changes from \emph{newly active} to \emph{active}, and cannot influence any neighbors afterwards; and
(d) each {\em inactive} node $v$ can be influenced by at most one of its \emph{newly active} neighbors independently, and the neighbors' attempts are sequenced in an arbitrary order. 
Once a node is \emph{active}, it cannot change its state.  
\eat{
\begin{itemize}
\item at any time step, each node in $G$ has exactly one of the three states in $\{$ \emph{active}, \emph{newly active}, \emph{inactive} $\}$;
\item a cascade starts from a {\em source node} $s$ being \emph{newly active} at time step $0$.
\item for a \emph{newly active} node $u$ at time step $T$, $u$ has only one opportunity to activate its \emph{inactive} neighbors,
such that at time $T+1$, (a) if $v$ is an inactive neighbor of $u$, $v$ becomes \emph{newly active} with probability $f{u, v}$;
(b) the state of $u$ changes from \emph{newly active} to \emph{active}, and cannot activate any \emph{inactive} neighbors afterwards;
once a node is \emph{active}, it cannot turn back to being \emph{inactive} or \emph{newly active};
\item each {\em inactive} node $v$ is activated by at most one of its multiple \emph{newly active} neighbors independently; 
\item the process ends if there are no newly active nodes.
\end{itemize}
}

Based on the independent cascade model, we define a {\em cascade} $C$ over graph $G$ = $(V,E,f)$
as a {\em directed tree} $(V_c,E_c,s, {\cal T})$ where
(a) $V_c \subseteq V$, $E_c \subseteq E$;
(b) $s \in V_c$ is the {\em source node} from which the information starts to propagate; and
(c) ${\cal T}$ is a function which assigns for each node $v_i \in V_c$ a {\em time step} $t_i$, which represents
that $v_i$ is {\em newly active} at time step $t_i$.
Intuitively, a cascade is a tree representation of the
``trace'' of the information propagation from a specified source node $s$ to a set of influenced nodes.

Indeed, one may verify that any cascade from $s$ following the independent cascade model is a tree rooted at $s$.

\begin{example}
\label{exa-icm}
The graph $G$ in Fig.~\ref{exa-cascade} depicts a social graph.
The tree $T_1$ and $T_2$ are two possible cascades following the independent cascade model.
For instance, after issuing an ad of ``Iphone 4s'', \texttt{Ann} at time $0$
becomes ``newly active''.
\texttt{Bill} and \texttt{Jack} retweet the ad at time $1$. \texttt{Ann} becomes ``active'',
while \texttt{Bill} and \texttt{Jack} are turned to ``newly active''. The process
repeats until the ad reaches \texttt{Mary} at time step $3$. The trace of the information propagation
forms the cascade $T_1$.
\end{example}


As remarked earlier, it is often difficult
to observe the entire structure of a cascade in practice. We model the
observed information for a cascade as a {\em partial observation}.

\stitle{Partial observation}. Given a cascade $C$ = $(V_c,E_c,s,{\cal T})$,
a pair $(v_i, t_i)$ is an \emph{observation point}, if $v_i \in V$ is known (observed) to be \emph{newly active} {\em at} or {\em by} time step $t_i$.
A {\em partial observation} $X$ is a set of observation points. 
Specifically, $X$ is a {\em complete observation} if for any $v \in V_c$, there is 
an observation point $(v, t) \in X$. 
To simplify the discussion, we also assume that pair $(s, 0) \in X$ where $s$ is the source node.
The techniques developed in
this paper can be easily adapted to the case where the source node is unknown.

\eat{
Intuitively, an observation point $(v, t)$ indicates that
node $v$ is newly activate at time stamp $t' \leq t$, due to possible delay of the observation.
To simplify the discussion, we assume that pair $(s, 0) \in X$ where $s$ is the source node. As will be discussed, the techniques developed in
this paper can easily be adapted to the case where the source node is unknown.
}

\eat{
\etitle{Remarks.} Observe the following. (1) To simplify the discussion, we assume that pair $(s, 0) \in X$. As will be discussed, the techniques developed in
this paper can easily be adapted to the case where the source node is unknown.
(2) The observation pair $(v_i, t_i)$ indicates that
node $v_i$ is newly activate at some time stamp $t \leq t_i$, due to possible delay in the observed information.
}

We are now ready to introduce the idea of consistent trees. 

\subsection{Consistent trees}

Given a partial observation $X$ of a graph $G$ = $(V,E,f)$, a {\em bounded consistent tree} $T_s$ = $(V_{T_s},$ $E_{T_s}, s)$ \wrt $X$ is
a directed subtree of $G$ with root $s \in V$, such that for {\em every} $(v_i, t_i) \in X$, $v_i \in V_{T_s}$, and $s$ reaches $v_i$ by $t_i$ hops, \ie
there exists a path of length {\em at most} $t_i$ from $s$ to $v_i$.
Specifically, we say a consistent tree is a {\em perfect consistent tree} if for every $(v_i, t_i) \in X$ and  $v_i \in V_{T_s}$,
there is a path of length {\em equals to} $t_i$ from $s$ to $v_i$. 

Intuitively, consistent trees represent possible cascades which
conform to the independent cascade model, as well as the partial observation.
Note the following:
(a) the path from the root $s$ to a node $v_i$ in a bounded consistent tree $T_s$ is not necessarily
a shortest path from $s$ to $v_i$ in $G$, as observed in~\cite{kossinets2008structure};
%
%
(b) 
the perfect consistent trees model
cascades when the partial observation is accurate, \ie each time $t_i$ in an observation point
$(v_i,t_i)$ is exactly the time when $v_i$ is newly active; in contrast, in bounded consistent trees,
an observation point $(v, t)$ indicates that
node $v$ is newly active at the time step $t' \leq t$, due to possible {\em delays} in the information propagation,
as observed in~\cite{ChaBAG12}.  


\begin{example}
\label{exa-consist}
Recall the graph $G$ in Fig.~\ref{exa-cascade}.
The partial observation of a cascade in $G$ is $X$ = $\{(\texttt{Ann},0)$, $(\texttt{Bill},1)$, $(\texttt{Mary},3)\}$.
The tree $T_1$ is a perfect consistent tree \wrt $X$, where $T_2$ is a bounded consistent tree \wrt $X$.

Now consider the trees in Fig.~\ref{exa-contree}. One may verify that
(a) $T_3$, $T_4$ and $T_5$ are bounded consistent trees \wrt~$X$;
(b) $T_3$ and $T_4$ are perfect consistent trees \wrt~$X$, where $T_5$ is not a perfect consistent tree.
(c) $T_6$ is not a consistent tree, as there is no path from the source \texttt{Ann} to \texttt{Mary} with length
no greater than $3$ as constrained by the observation point $(\texttt{Mary},3)$.
\end{example}

\eat{
\begin{example}
In Figure~\ref{fig:example}, $G = (V, E)$ is a directed graph of six nodes and ten edges. The diffusion function $f$ map each edge to a real number shown as the probability around each edge. $X$ is an observation of a cascade with two \emph{observation points}: $\{(b, 1), (d, 3)\}$. In this example, there are $8$ possible consistent trees rooted from $a$ indexed from $T^0_a$ to $T^7_a$.
\end{example}

\begin{figure*}[ht]
\centering
\subfigure[$G = (V, E)$]{
\includegraphics[scale=0.1]{figure/example/case1-in}
\label{fig:example:case1in}
}\subfigure[$T^0_a$]{
\includegraphics[scale=0.1]{figure/example/case1tree0}
\label{fig:example:case1tree0}
}\subfigure[$T^1_a$]{
\includegraphics[scale=0.1]{figure/example/case1tree1}
\label{fig:example:case1tree1}
}\subfigure[$T^2_a$]{
\includegraphics[scale=0.1]{figure/example/case1tree2}
\label{fig:example:case1tree2}
}\subfigure[$T^3_a$]{
\includegraphics[scale=0.1]{figure/example/case1tree3}
\label{fig:example:case1tree3}
}
\subfigure[$T^4_a$]{
\includegraphics[scale=0.1]{figure/example/case1tree4}
\label{fig:example:case1tree4}
}\subfigure[$T^5_a$]{
\includegraphics[scale=0.1]{figure/example/case1tree5}
\label{fig:example:case1tree5}
}\subfigure[$T^6_a$]{
\includegraphics[scale=0.1]{figure/example/case1tree6}
\label{fig:example:case1tree6}
}\subfigure[$T^7_a$]{
\includegraphics[scale=0.1]{figure/example/case1tree7}
\label{fig:example:case1tree7}
}
\caption{An example of Graphs, Cascades and Consistent Trees}
\label{fig:example}
\end{figure*}
}

\subsection{Cascade inference problem}

We introduce the general cascade inference problem.
Given a social graph $G$ and a partial observation $X$,
the {\em cascade inference problem} is to determine whether there exists
a consistent tree $T$ \wrt $X$ in $G$.

There may be multiple consistent trees for a partial observation,
so one often wants to identify the best consistent tree.
We next provide two quantitative metrics to measure the quality of
the inferred cascades. Let $G$ = $(V,E,f)$ be a social graph,
and $X$ be a partial observation.

\stitle{Minimum weighted consistent trees}. In practice, one often wants to 
identify the consistent trees that are most likely to be the
real cascades. 
Recall that each edge $(u, v)$ $\in E$ in a given network $G$ carries a value assigned by
a diffusion function $f(u, v)$, which indicates the probability that $u$ influences $v$. 
Based on $f(u, v)$, we introduce a {\em likelihood function} as a quantitative metric for consistent trees.

\etitle{Likelihood function}. Given a graph $G$ = $(V,E,f)$, a partial observation
$X$ and a consistent tree $T_s = (V_{T_s}, E_{T_s},s)$, the
{\em likelihood} of $T_s$, denoted as $L_X(T_s)$,
is defined as:

\vspace{-1.5ex}
\begin{equation}
L_X(T_s) = \mathbb{P}(X \mid T_s) = \prod_{(u, v) \in E_{T_s}} f(u, v).
\label{eq:mlct}
\end{equation}
\vspace{-.5ex}

Following common practice, we opt to use the log-likelihood metric, where 
\vspace{-.5ex}
\[
L_X(T_s) = \sum_{(u, v) \in E_{T_s}} \log f(u, v)
\]
\vspace{-.5ex}

Given $G$ and $X$, a natural problem is to find the consistent tree of
the maximum likelihood in $G$ \wrt $X$. Using log-likelihood, the {\em minimum weighted consistent tree} problem
is to identify the consistent tree $T_s$ with the minimum $- L_X(T_s)$, which in turn has the
maximum likelihood.

\stitle{Minimum consistent trees}.
Instead of weighted consistent trees,
one may simply want to find the {\em minimum}
structure that represents a cascade~\cite{mathioudakis2011sparsification}.
The minimum consistent tree, as a special case of the
minimum weighted consistent tree, depicts the smallest
cascades with the fewest communication steps
to pass the information to all the observed nodes.
In other words, the metric favors those consistent trees consist with the
given partial observation with the fewest edges.


Given $G$ and $X$, the {\em minimum consistent tree} problem is to find the
minimum consistent trees in $G$ \wrt $X$. 

In the following sections, we
investigate the cascade inference problem, and the related
optimization problems using the two metrics.
We investigate the problems for perfect consistent trees in Section~\ref{perfect-infer},
and for bounded consistent trees in Section~\ref{bounded-infer},
respectively.


\eat{
In this work, we are interested in discovering unobserved nodes that result in the partial observation of a cascade.
Given a social network and a partial observation, there are multiple ways to develop the objective function such as
finding $k$ nodes that maximize the likelihood of the observation, finding $k$ sources that maximize the likelihood
of the observation, finding the consistent tree with maximum likelihood and so on. In this paper, we focus on finding
the consistent tree that maximizes the likelihood.
}

\section{Cascades as perfect trees}
\label{perfect-infer}


As remarked earlier, when the partial observation $X$ is accurate,
\eat{
where
each observation point $(v,t) \in X$ exactly describe the newly activate time step $t$ for a node $v$,
}
one may want to infer the cascade structure via {\em perfect consistent trees}.
\eat{
and establish complexity bounds and algorithms for the problems.
We then consider a practical special case of the problem, and provide approximation algorithms
with provable performance guarantees.
}
The {\em minimum (resp. weighted) perfect consistent tree} problem,
denoted as \mpct (resp. \mwpct) is to find the perfect consistent trees with minimum size (resp. weight)
as the quality metric. 

Though it is desirable to have efficient polynomial time algorithms to identify perfect consistent trees, the problems of searching \mpct and \mwpct are nontrivial.

\vspace{1ex}
\begin{proposition}
\label{pct}
Given a graph $G$ and a partial observation $X$,
(a) it is \NP-complete to determine whether there is a perfect consistent tree \wrt $X$
in $G$; and
(b) the~\mpct and ~\mwpct problems are \NP-complete and \APX-hard.
\end{proposition}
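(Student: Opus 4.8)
The plan is to obtain all three claims from a single reduction from the \NP-complete Directed Hamiltonian Path problem, exploiting the fact that the \emph{exact}-length requirement of a perfect consistent tree forces a spanning path with no slack. Membership in \NP is routine in each case: a perfect consistent tree has at most $|V|$ nodes, so it is a polynomial-size certificate, and one verifies in polynomial time that it is a directed subtree of $G$ rooted at $s$ and that the depth of $v_i$ equals $t_i$ (resp.\ that its size or log-likelihood weight meets a given bound) for every $(v_i,t_i)\in X$.

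For the \NP-hardness in part (a), I would reduce from the problem of deciding whether a directed graph $G=(V,E)$ with $|V|=n$ has a Hamiltonian path from a fixed source $s$ to a fixed sink $t$. Given such an instance, I build the cascade instance on the same $G$ with an arbitrary diffusion function (say $f\equiv\tfrac12$) and partial observation $X=\{(s,0),(t,n-1)\}$. The crux is the equivalence: a perfect consistent tree \wrt $X$ exists iff $G$ has a Hamiltonian $s$--$t$ path. The forward direction is immediate, since a Hamiltonian path is itself a directed path of length exactly $n-1$ from $s$ to $t$, hence a valid perfect consistent tree. For the converse, any perfect consistent tree contains a simple tree path from $s$ to $t$ of length exactly $n-1$; this path uses $n$ distinct vertices, and since $G$ has only $n$ vertices it must visit all of them, yielding a Hamiltonian path. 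This settles part (a).

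For part (b), the decision versions of \mpct and \mwpct lie in \NP by the certificate argument above. Their \NP-hardness follows from the same reduction by setting the bound to $n-1$: with $f\equiv\tfrac12$ the log-likelihood weight of any tree is proportional to its number of edges, and every feasible tree already has at least $n-1$ edges (its path to $t$ alone has $n-1$ edges), so asking for a perfect consistent tree of size (resp.\ weight) at most $n-1$ is exactly asking for a Hamiltonian $s$--$t$ path. Hence both decision problems are \NP-complete.

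Finally, for \APX-hardness I would argue directly from the \NP-hardness of feasibility established in part (a). Suppose, for contradiction, that for some constant $\alpha$ there were a polynomial-time $\alpha$-approximation algorithm for \mpct (or \mwpct). On any instance admitting a perfect consistent tree it must output a feasible one; on an infeasible instance it cannot. Thus, running it and checking in polynomial time whether its output is a valid perfect consistent tree would decide the existence problem of part (a), which is impossible unless {\sc p}\,=\,\NP. Hence no polynomial-time approximation of any constant ratio exists, which in particular yields \APX-hardness. The main obstacle is not this approximation step, which is short, but pinning down the equivalence in part (a): one must argue carefully that the exact length-$(n-1)$ constraint together with the tree (acyclicity) structure leaves no room for extra branches, so that the path consuming all $n$ vertices coincides with a Hamiltonian path rather than merely implying one.
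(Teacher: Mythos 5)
Your treatment of part (a) and of the \NP-completeness claims in (b) is correct and is essentially the paper's own route: the paper also reduces from the Hamiltonian path problem for (a) and then notes that \NP-completeness of \mpct and \mwpct is an immediate consequence. Your fixed-endpoint version with $X=\{(s,0),(t,n-1)\}$ is a clean instantiation — the tree path from $s$ to $t$ is simple and of length exactly $n-1$, hence visits all $n$ vertices — and the membership-in-\NP arguments are routine, as you say.

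The gap is in the \APX-hardness claim of part (b). What your argument actually establishes is that \mpct and \mwpct are not \emph{in} \APX unless {\sc p} $=$ \NP: any constant-ratio algorithm would have to produce a feasible perfect consistent tree whenever one exists, and so would decide the \NP-complete feasibility question of part (a). But non-membership in \APX is not the same statement as \APX-hardness, and neither implies the other. \APX-hardness is a reduction-based notion --- one must exhibit an approximation-preserving reduction (an \AFPR) from some \APX-hard problem to \mpct --- and the mere fact that a problem is hard to approximate gives no recipe for encoding, say, \kw{MAX}-\kw{3SAT} instances into it. (Indeed, under the paper's own phrasing, \APX-hard problems are \APX problems to which every \APX problem reduces, so your conclusion that the problem lies outside \APX sits in outright tension with the claim you are trying to prove.) There is also a substantive difference in what is being shown: your argument places all of the hardness in the feasibility question and says nothing about instances on which a perfect consistent tree is guaranteed to exist, which is where the inapproximability of the \emph{optimization} itself must be located. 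The paper closes this gap by constructing an \AFPR from the minimum directed Steiner tree problem \mst, which is \APX-hard; that reduction (mapping required nodes to observation points so that low-weight perfect consistent trees correspond to low-weight Steiner trees) is the missing ingredient in your proof.
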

\vspace{1ex}

One may verify Proposition~\ref{pct}(a) by a reduction from the Hamiltonian path problem~\cite{Vazirani:2001:book}, which is to determine
whether there is a simple path of length $|V|-1$ in a graph $G$ =$(V,E)$.
Following this, one can verify that the \mpct and \mwpct problems are \NP-complete as an immediate result.

Proposition~\ref{pct}(b) shows that the \mpct and \mwpct problems are hard to approximate.
The \APX class~\cite{Vazirani:2001:book} consists of \NP optimization problems that can be
approximated by a polynomial time (\PTIME) algorithm within {\em some} positive constant.
The \APX-hard problems are \APX problems to which every \APX problem
can be reduced.
Hence, the problem for computing a minimum (weighted) perfect consistent tree
is among the 
hardest ones 
that allow \PTIME algorithms with a constant approximation ratio.

It is known that if there is an {\em approximation preserving reduction (\AFPR)}~\cite{Vazirani:2001:book} from a problem
$\Pi_1$ to a problem $\Pi_2$, and if problem $\Pi_1$ is
\APX-hard, then $\Pi_2$ is \APX-hard~\cite{Vazirani:2001:book}.
To see Proposition~\ref{pct}(b), we may construct an \AFPR from the minimum directed steiner tree (\mst) problem.
An instance of a directed steiner tree problem $I$ = $\{G, V_r, V_s, r, w\}$ consists of a graph $G$,
a set of {\em required} nodes $V_r$, a set of {\em steiner} nodes $V_s$, a source node $r$ and a function
$w$ which assigns to each node a positive weight. The problem is to find a minimum weighted tree rooted at $r$, such that
it contains all the nodes in $V_r$ and a part of $V_s$. We show such a reduction exists.
Since \mst is \apx-hard, 
\mpct is \apx-hard.



\subsection{Bottom-up searching algorithm}

Given the above intractability and approximation hardness result,
we introduce a heuristic 
~\cmpwpct for the \mwpct problem. The idea is to
(a) generate a ``backbone network'' $G_b$ of $G$ which contains all the nodes and edges that are possible to
form a perfect consistent tree, using a set of {\em pruning rules}, and also rank the observed nodes in $G_b$
with the descending order of their time step in $X$, and (b) perform a bottom-up evaluation for each time step in $G_b$
using a local-optimal strategy, following the descending order of the time step.

\stitle{Backbone network}. We consider pruning strategies to reduce the nodes and the edges that are not possible to
be in any perfect consistent trees, given a graph $G$ = $(V,E,f)$ and a partial observation $X$ = $\{(v_1,t_1), \ldots, (v_k,t_k)\}$.
We define a backbone network $G_b$ = $(V_b, E_b)$, where 
\begin{itemize}
\item $V_b$ = $\bigcup \{v_j|\dist(s,v_j) + \dist(v_j, v_i)\leq  t_i \}$ for each $(v_i, t_i) \in X$; and
\item $E_b$ = $\{(v',v)| v' \in V_b, v \in V_b, (v',v) \in E \}$
\end{itemize}

Intuitively, $G_b$ includes all the possible nodes and edges that may appear in a perfect consistent tree for a given
partial observation. In order to construct $G_b$, a set of {\em pruning rules} can be developed as follows:
 if for a node $v'$ and each observed node $v$ in a cascade with time step $t$,  $\dist(s,v') + \dist(v',v) > t$,
 then $v'$ and all the edges connected to $v'$ can be removed from $G_b$.

\begin{figure}[tb!]
\vspace{-1.5ex}
\begin{center}
\fbox{
{\small
\begin{minipage}{3.3in}
\mat{0ex}{
\sstab {\sl Input:\/} \= graph $G$ and partial observation $X$.\\
{\sl Output:\/} a perfect consistent tree $T$ in $G$. \\
\stab \bcc \hspace{1ex} \= tree $T$ = $(V_T,E_T)$, where $V_T$ := $\{v| (v,t) \in X\}$,\\
\>  set level $l(v)$:= $t$ for each $(v,t)\in X$, $E$ := $\emptyset$; \\
\icc \> set $V_b$ := $\{v_b| \dist(s,v_b) \leq t_{max} \}$; \\ 
\icc \> \If there is a node $v$ in $X$ and $v \notin V_b$ \Then \Return $\emptyset$;\\
\icc \> set $E_b$ := $\{(v',v)|(v',v)\in E, v'\in V_b, v\in V_b)\}$; \\
\icc \> \For \Each $v \in V_b$ \Do \\
\icc \> \hspace{1ex} \If there is no $(v_i,t_i) \in X$ that \\
\>\hspace{1ex} $\dist(s,v)$+$\dist(v,v_i) \leq t_i$ \Then \\
\icc \> \hspace{1ex} $V_b$ = $V_b \setminus \{v\}$; \\
\icc \> \hspace{1ex} $E_b$ = $E_b \setminus \{(v_1,v_2)\}$ where $v_1 = v$ or $v_2 = v$; \\
\icc \> graph $G_b$ := $(V_b, E_b)$; \\
\icc\> list $L$ := $\{(v_1,t_1),\ldots, (v_k,t_k)\}$ \\
\> where $t_i \leq t_{i+1}$, $(v_i,t_i) \in X$, $i \in [1,k-1]$; \\
\icc \> \For \Each $i$ $\in [1, t_{max}] $ following descending order \Do \\
\icc\> \hspace{1ex}  $V_t$:= $V_1 \cup V_2 \cup V_3$, $V_1$ := $\{v_i| (v,t_i)\in X\}$; \\
\>\hspace{1ex}  $V_2$ := $\{v| v \in V_T, l(v) = t_i \}$;\\ 
\>\hspace{1ex}  $V_3$ := $\{v'|(v',v)\in E_b, v\in V_1 \cup V_2, v'\notin V_T\}$; \\
\icc \> \hspace{1ex} $E_t$ := $\{(v',v)| v' \in V_3, v \in V_1\cup V_2, (v',v) \in E_b \}$; \\
\icc \> \hspace{1ex} construct $G_t$ = $(V_t, E_t)$; \\
\icc \>\hspace{2ex} $T$ := $T \cup \lsearch(G_t, V_1 \cup V_2, V_3, i)$; \\
\icc \> \If $T$ is a tree \Then \Return $T$; \\
\icc \> \Return $\emptyset$; \\
}

\vspace{-4ex}
\mat{0ex}{
{\bf Procedure}~\lsearch\\
\sstab {\sl Input:\/} \= A bipartite graph $G_t$, \\
\> node set $V$, node set $V_s$, a number $t_i$; \\
{\sl Output:\/} a forest $T_t$.\\
\sstab \bcc \hspace{1ex}\= $T_t$ = $\emptyset$; \\
\icc \> construct $T_t$ as a minimum weighted steiner forest \\
\> which cover $V$ as the required nodes; \\
\icc \> \For \Each tree $T_i \in T_t$ \Do \\
\icc \> \hspace{2ex} $l(r)$ := $t_i - 1$ where $r \in V_s$ is the root of $T_i$; \\
\icc \> \Return $T_t$;\\
}
\vspace{-4.5ex}
\myhrule
\end{minipage}
}
}
\end{center}
\vspace{-2ex}
\caption{Algorithm~\cmpwpct: initialization, pruning and local searching}
\label{fig-cmpwpct}
\vspace{-4ex}
\end{figure}

\stitle{Algorithm}. Algorithm~\cmpwpct, as shown in Fig.~\ref{fig-cmpwpct},
consists of the following steps:

\etitle{Initialization} (line~1). The algorithm~\cmpwpct starts by initializing a tree $T$,
by inserting all the observation points into $T$. Each node $v$ in $T$ is assigned with a {\em level}
$l(v)$ equal to its time step as in $X$. The edge set is set to empty.

\etitle{Pruning} (lines~2-10). The algorithm~\cmpwpct then constructs a backbone network $G_b$ with the pruning rules (lines~2-9).
It initializes a node set $V_b$ within $t_{max}$ hop of the source node $s$, where $t_{max}$
is the maximum time step in $X$ (line~2). If there exists some node $v \in X$ that is not in $V_b$, the algorithm returns $\emptyset$, since there is no path from $s$ reaching $v$ with $t$ steps for $(v,t) \in X$ (line~3). It further removes the redundant nodes and edges that are not in any perfect trees, using the pruning rules (lines~5-8). The network $G_b$ is then constructed with $V_b$ and $E_b$ at line~9.
The partial observation $X$ is also sorted \wrt the time step (line~10). 

\etitle{Bottom-up local searching} (lines~11-17).
Following a bottom-up greedy strategy, the algorithm~\cmpwpct processes each observation point as follows.
For each $i$ in $[1,t_{max}]$, it generates a (bipartite) graph $G_t$.
(a) It initializes a node set $V_t$ as the union of three sets of nodes $V_1$,
$V_2$ and $V_3$ (line~12), where (i) $V_1$ is the nodes in the observation points with time step $t_i$,
(ii) $V_2$ is the nodes $v$ in the current perfect consistent tree $T$ with level $l(v)$ = $t_i$,
and (iii) $V_3$ is the union of the parents for the nodes in $V_1$ and $V_2$.
(b) It constructs an edge set $E_t$ which consists of the edges from the nodes in
$V_3$ to the nodes in $V_1$ and $V_2$. (c) It then generates $G_t$ with $V_t$ and the edge set $E_t$,
which is a bipartite graph.
After $G_t$ is constructed, the algorithm~\cmpwpct invokes procedure~\lsearch to compute a ``part''
of the perfect tree $T$, which is an {\em optimal} solution for $G_t$, a part of the graph $G_b$
which contains all the observed nodes with time step $t_i$.
It expands $T$ with the returned partial tree (line~15). 
The above process (lines~11-15) repeats for each $i \in [1, t_{max}]$ until
all the nodes in $X$ are processed. Algorithm~\cmpwpct then checks if the
constructed $T$ is a tree. If so, it returns $T$ (line~16).
Otherwise, it returns $\emptyset$ (line~17).
The above procedure is as illustrated in Fig.~\ref{exa-alg}.

\etitle{Procedure~\lsearch}. Given a (bipartite) graph $G_t$, and two sets of nodes $V$ and $V_s$ in $G_t$,
the procedure~\lsearch computes for $G_t$ a set of trees $T_t$ = $\{T_1, \ldots, T_i\}$ with the minimum total weight (line~2),
such that (a) each $T_i$ is a $2$-level tree with a root in $V_s$ and leaves in $V$,
(b) the leaves of any two trees in $T_t$ are disjoint, and
(c) the trees contain all the nodes in $V$ as leaves.
For each $T_i$, \lsearch assigns its root $r$ in $V_s$ a level $l(r)$ = $t_i-1$ (line~4).
$T_t$ is then returned as a part of the entire perfect consistent tree (line~5).
In practice, we may either employ linear programming, or an
 algorithm for \mst problem (\eg~\cite{RobinsZ05}) to compute $T_t$.

\vspace{-1ex}
\begin{figure}[tb!]
\centerline{\includegraphics[scale=0.35]{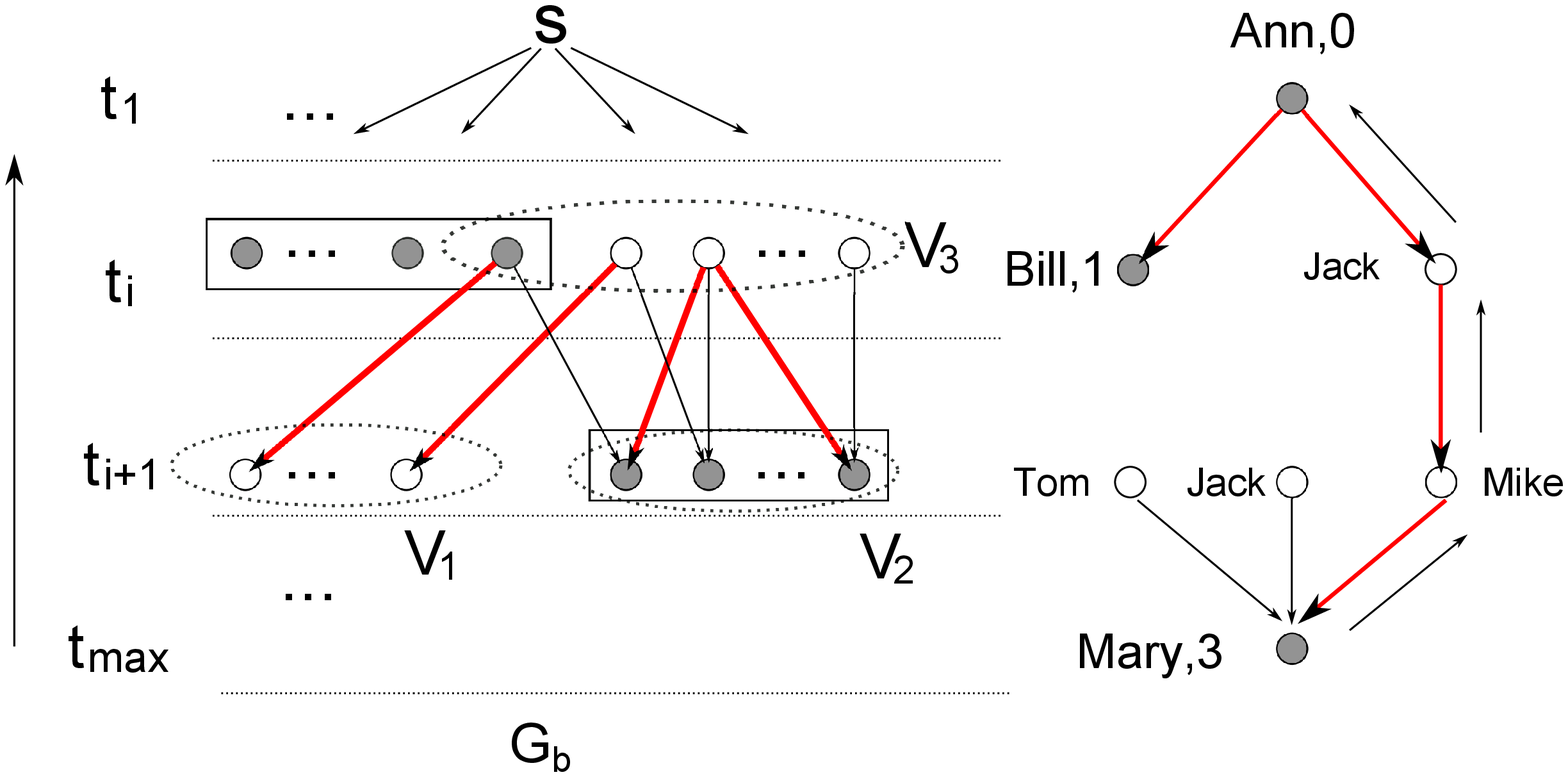}}
\caption{The bottom-up searching in the backbone network}
\label{exa-alg}
\vspace{-3ex}
\end{figure}

\vspace{1ex}

\begin{example}
\eat{
For the observed nodes $V_2$ at time step $t_{i+1}$, the algorithm~\cmpwpct
identifies $V_1$ as the nodes in $T$ with the same level, and $V_3$ as the union of the
parents of $V_1$ and $V_2$. For the induced graph $G_t$ which consists of $V_1$, $V_2$
and $V_3$ as well as the edges from $V_3$ to the nodes in $V_1$ and $V_2$, the optimal
steiner forest is selected as shown in the figure, as a part of a perfect consistent tree.
}
%
The cascade $T_1$ in Fig.~\ref{exa-cascade}, as a minimum weighted perfect consistent tree,
can be inferred by algorithm~\cmpwpct as illustrated in Fig.~\ref{exa-alg}.
~\cmpwpct first initializes a tree $T$ with the node $\texttt{Mary}$.
It then constructs $G_t$
as the graph induced by edges $(\texttt{Tom, Mary})$, $(\texttt{Jack, Mary})$, and $(\texttt{Mike, Mary})$.
Intuitively, the three nodes as the parents of $\texttt{Mary}$ are the possible nodes which
accepts the message at time step $2$.
It then selects the tree with the maximum probability, which is a single edge $(\texttt{Mike, Mary})$,
and adds it to $T$. Following $\texttt{Mike}$, it keeps choosing the optimal tree structure for each level, and identifies
nodes $\texttt{Jack}$. The process repeats until
~\cmpwpct reaches the source $\texttt{Ann}$. It then returns the perfect consistent tree $T$ as the
inferred cascade from the partial observation $X$.
\end{example}

\etitle{Correctness}. The algorithm~\cmpwpct either returns $\emptyset$,
or correctly computes a perfect consistent tree \wrt the partial observation $X$.
Indeed, one may verify that (a) the pruning rules only remove the nodes and edges that are not in any
perfect consistent tree \wrt $X$, and (b)~\cmpwpct has the loop invariant that
 at each iteration $i$ (lines~11-15), it always constructs a part of
a perfect tree as a forest. 

\etitle{Complexity}. The algorithm~\cmpwpct is in time $O(|V||E|+|X|^2+t_{max}*{\cal A})$, where
$t_{max}$ is the maximum time step in $X$, and ${\cal A}$ is the time complexity
of procedure~\lsearch. Indeed, (a) the initialization
and preprocessing phase (lines~1-9) takes $O(|V||E|)$ time, (b) the
sorting phase is in $O(|X|^2)$ time, (c) the bottom-up construction
is in $O(|t_{max}*{\cal A}|)$, which is further bounded by $O(|t_{max}*|V|^3)$ if
an approximable algorithm is used~\cite{RobinsZ05}. 
In our experimental study, we utilize efficient linear programming to compute the
{\em optimal} steiner forest.

The algorithm~\cmpwpct can easily be adapted to the problem of finding the minimum perfect consistent trees,
where each edge has a unit weight.

\stitle{Perfect consistent SP trees}.
 The independent cascade model may
 be an overkill for real-life applications, as observed in~\cite{Chen10,kimura2006tractable}.
Instead, one may identify the consistent trees which follow
 the shortest path model~\cite{kimura2006tractable}, where cascades propagate
 following the shortest paths.  
 %
%
%
 We define a {\em perfect shortest path (\sp) tree} rooted at a given
 source node $s$ as a
perfect consistent tree, such that for each observation point $(v,t) \in X$
of the tree, $t$ = $\dist(s,v)$; in other words,
the path from $s$ to $v$ in the tree is the {\em shortest path} in $G$.
The \mwpct (resp. \mpct) problem for \sp trees is to identify the \sp trees
with the maximum likelihood (resp. minimum size).


 \begin{proposition}
\label{mwpctsp}
Given a graph $G$ and a partial observation $X$,
(a) it is in \PTIME to find a \sp tree \wrt $X$;
(b) the \mpct and \mwpct problems for perfect \sp trees are
\NP-hard and \APX-hard;
(c) the \mwpct problem is approximable within $O(d*{\frac{\log f_{min}}{\log f_{max}}})$,
where $d$ is the diameter of $G$, and $f_{max}$ (resp. $f_{min}$) is the maximum (resp. minimum) probability by
the diffusion function $f$.
\end{proposition}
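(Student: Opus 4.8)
The plan is to treat the three parts separately, reducing everything to the layered structure induced by the shortest-path distances from $s$. Throughout I write $w(u,v) = -\log f(u,v)\ge 0$ for the edge weights of the (negated) log-likelihood objective, and set $w_{min}=-\log f_{max}$, $w_{max}=-\log f_{min}$, so that $\frac{w_{max}}{w_{min}}=\frac{\log f_{min}}{\log f_{max}}$. For part~(a), I would first run a \bfs from $s$ to compute $\dist(s,v)$ for every $v$. A perfect \sp tree exists if and only if every observation point satisfies $t_i=\dist(s,v_i)$ and $v_i$ is reachable, which is checkable in \PTIME. When the test passes, any shortest-path tree rooted at $s$ already contains a length-$\dist(s,v_i)$ path to each $v_i$, so the minimal subtree spanning $s$ together with all observed nodes is a feasible perfect \sp tree. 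This settles~(a).

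For part~(b), I would exhibit an approximation-preserving reduction (an \AFPR) from minimum set cover (resp.\ vertex cover), exploiting that \sp trees only admit edges between consecutive distance layers. Given a set-cover instance with universe $\{e_1,\dots,e_n\}$ and sets $S_1,\dots,S_m$, build $G$ with $s$ in layer $0$, a node for each set in layer $1$ with edges $(s,S_j)$, and a node for each element in layer $2$ with edges $(S_j,e_i)$ whenever $e_i\in S_j$; take $X=\{(s,0)\}\cup\{(e_i,2)\}$. Then $\dist(s,e_i)=2$, so any perfect \sp tree must attach each $e_i$ to exactly one set containing it, and the chosen sets form a set cover. For \mwpct I would give the element edges weight $0$ (set $f=1$) and the source edges a fixed positive weight (set $f=1/2$); the tree weight then equals the number of chosen sets times a constant, so optimal trees correspond exactly to minimum set covers, the reduction preserves ratios, and \APX-hardness (indeed logarithmic inapproximability) is inherited. \NP-hardness is immediate from the decision version.

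For part~(c), I would turn the feasible tree from~(a) into the approximation guarantee by a counting argument. Construct a feasible \sp tree $T$ as the union of (arbitrary) shortest paths from $s$ to the observed nodes, pruned to a single parent per node; since subpaths of shortest paths are shortest, every node stays on its true distance layer, so $T$ is a valid perfect \sp tree with $|E(T)|\le 1+\sum_i t_i\le 1+|X|\,d$. On the other hand, any optimal tree $T^*$ contains all $|X|$ observed nodes, so $|E(T^*)|\ge |X|-1$, and every edge weight lies in $[w_{min},w_{max}]$. Hence
\[
\frac{w(T)}{w(T^*)}\;\le\;\frac{|E(T)|\,w_{max}}{|E(T^*)|\,w_{min}}\;\le\;\frac{(1+|X|d)\,w_{max}}{(|X|-1)\,w_{min}}\;=\;O\!\Big(d\cdot\frac{\log f_{min}}{\log f_{max}}\Big),
\]
which is exactly the claimed bound, and it is computable in \PTIME.

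I expect the main obstacle to be the \APX-hardness of the \emph{unweighted} \mpct problem: the natural set-cover reduction produces trees whose size is the cover size plus the additive term $n$ (the forced element edges), and this constant shift destroys approximation ratios. The fix is to reduce instead from vertex cover on bounded-degree graphs, cast as set cover with universe the edges and sets the vertices; there the optimum is $\Theta(n)$, so the additive term only rescales the inapproximability constant and the gap is preserved. The weighted reduction, the \PTIME feasibility test, and the counting-based approximation are comparatively routine once the layered shortest-path view is in place.
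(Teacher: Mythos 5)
Your proposal is correct, but for parts (b) and (c) it takes a genuinely different route from the paper. On (a) you and the paper coincide: feasibility reduces to checking $t_i = \dist(s,v_i)$ for every observation point via a \bfs from $s$, after which any shortest-path tree restricted to the observed nodes is feasible. On (b), the paper obtains \APX-hardness by an \AFPR from the minimum directed Steiner tree (\mst) problem, the same device it uses for Proposition~\ref{pct}(b); you instead reduce from set cover for \mwpct (weight-$0$ element edges so the tree weight equals the cover size) and from bounded-degree vertex cover for \mpct (to neutralize the additive term of forced element edges that would otherwise destroy the ratio -- an obstacle you correctly identify and repair). Your gadget is layered, so every $s$--$e_i$ path has length exactly $\dist(s,e_i)=2$ and the shortest-path constraint is automatically respected, which is exactly the point where a generic \mst reduction needs extra care; moreover the strict weighted reduction yields set-cover-type inapproximability, strictly stronger than the claimed \APX-hardness. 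On (c), the paper's algorithm \cmpwpctsp approximately computes an undirected Steiner tree on the backbone graph with the observed nodes as terminals, repairs multi-parent nodes by splicing in shortest paths, and bounds the ratio through the Steiner approximation factor plus at most $d$ extra edges per terminal; you dispense with the Steiner step and take a pruned union of shortest paths, deriving the same $O(d\cdot\frac{\log f_{min}}{\log f_{max}})$ bound from the counting inequalities $|E(T)|\le |X|\,d$ and $|E(T^*)|\ge |X|-1$. Your variant is simpler and runs in $O(|E|)$ rather than $O(|V|^3)$, at the price of forgoing the local optimization that makes the paper's trees better in practice; both arguments share the caveat that the ratio degenerates when $f_{max}=1$.
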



We next provide an approximation algorithm to the \mwpct problem for
\sp trees. 
%
Given a graph $G$ and a partial observation $X$, the algorithm,
denoted as~\cmpwpctsp (not shown),
first constructs the backbone graph $G_b$  
as in the algorithm~\cmpwpct. 
It then constructs node sets
$V_r$ = $\{v|(v,t)\in X\}$, and $V_s$ = $V \setminus V_r$.
Treating $V_r$ as required nodes, $V_s$ as steiner nodes, and the log-likelihood function as the weight function,
 ~\cmpwpctsp approximately computes an undirected minimum steiner tree $T$. 
If the directed counterpart $T'$ of $T$ in $G_b$ is not a tree, ~\cmpwpctsp transforms $T'$ to a tree:
for each node $v$ in $T'$ with more than one parent, it
(a) connects $s$ and $v$ via the shortest path,
and (b) removes the redundant edges attached to $v$.
It then returns $T'$ as an \sp tree.

One may verify that (a) $T'$ is a perfect \sp tree \wrt $X$,
 (b) the weight $- L_X(T')$ is bounded by $O(d* \frac{\log f_{min}}{\log f_{max}})$ times of the optimal weight, 
 and (c) the algorithm runs in $O(|V^3|)$ time, leveraging the approximation algorithm for the steiner tree problem~\cite{Vazirani:2001:book}.
\eat{
To see (2), denote
the optimal directed perfect consistent tree as $T^*$.
Observe that (a) the number of edges in the undirected steiner tree $T$ is bounded by $2|T'^*|$, where
$T'^*$ is the optimal undirected steiner tree; (b) at most $d$ extra edges are introduced to a path
from $s$ to a node $v$ in an observation point; and (c) $L_X(T)/L_X(T^*) \leq \frac{|T'| \log f_{min}}{|T^*| \log f_{max}}$
$\leq 2d* \frac{\log f_{min}}{\log f_{max}}$, which is further bounded by $O(d* \frac{\log f_{min}}{\log f_{max}})$.
Moreover, the algorithm runs in $O(|V^3|)$ time, by invoking the approximation algorithm for the steiner tree problem~\cite{Vazirani:2001:book}.
}
\eat{
$T$ is the perfect consistent tree with the minimum weight. Indeed, if there exists another perfect consistent
tree $T'$ with smaller weight, the tree is a steiner tree with smaller weight than $T$, which contradicts our assumption
that $T$ is the optimal.
}
Moreover, the algorithm~\cmpwpctsp can be used for the problem~\mpct for~\sp trees, 
where each edge in $G$ has the same weight. This achieves an approximation ratio of $d$.  



\newcommand{\extc}{\kw{X3C}}

\section{Cascades as bounded trees}
\label{bounded-infer}

In this section, we investigate the cascade inference problems for bounded consistent trees.
\eat{
We first introduce the bounded consistent
tree problem.
 In Section~\ref{sec-mbct} and Section~\ref{sec-mnwbct}
we introduce two metrics to measure the quality of the bounded consistent tree, as well
as their corresponding optimization problems, respectively.
}
In contrast to the intractable counterpart in Proposition~\ref{pct}(a),
the problem of finding a bounded consistent tree for a given graph and
a partial observation is in \PTIME.

\vspace{1ex}
\begin{proposition}
\label{bct}
For a given graph $G$ and a partial observation $X$,
there is a bounded consistent tree in $G$ \wrt $X$ if and only if
for each $(v,t)\in X$, $\dist(s,v) \leq t$, where $\dist(s,v)$ is
the distance from $s$ to $v$ in $G$.
\end{proposition}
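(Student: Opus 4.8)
The plan is to prove the two implications separately, handling sufficiency by an explicit shortest-path-tree construction. \textbf{Necessity ($\Rightarrow$).} Suppose a bounded consistent tree $T_s = (V_{T_s}, E_{T_s}, s)$ \wrt $X$ exists. By definition, for each $(v,t) \in X$ we have $v \in V_{T_s}$ and $s$ reaches $v$ by at most $t$ hops, \ie there is a path $\rho$ from $s$ to $v$ in $T_s$ of length at most $t$. Since $T_s$ is a subtree of $G$, the same $\rho$ is a path in $G$, so $\dist(s,v) \leq t$. As this holds for every observation point, the stated condition follows immediately.

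\textbf{Sufficiency ($\Leftarrow$).} Assume $\dist(s,v) \leq t$ for every $(v,t) \in X$. First I would run a breadth-first search from $s$ in $G$ to obtain a directed shortest-path tree $T$ rooted at $s$; this tree spans exactly the nodes reachable from $s$, and for each node $u$ it contains, the unique $s$-to-$u$ path in $T$ has length exactly $\dist(s,u)$. Since $\dist(s,v) \leq t < \infty$ for every $(v,t) \in X$, each such $v$ is reachable from $s$ and hence appears in $T$. I would then take $T_s$ to be the subtree of $T$ induced by the union of the $s$-to-$v$ paths over all $(v,t)\in X$. Because in a tree the union of root-to-node paths is itself a subtree containing the root, $T_s$ is a directed subtree of $G$ rooted at $s$, and for each $(v,t)\in X$ the $s$-to-$v$ path inside $T_s$ still has length $\dist(s,v) \leq t$. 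Hence $s$ reaches every observed $v$ by at most $t$ hops, so $T_s$ is a bounded consistent tree \wrt $X$.

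Together the two directions establish the equivalence. I do not anticipate a serious obstacle: the only point needing care is verifying that the object constructed in the sufficiency direction is genuinely a \emph{tree} rather than an arbitrary subgraph of $G$, and this is guaranteed because BFS produces a tree and the union of root-to-node paths within a tree is closed under forming a subtree. The shortest-path property of the BFS tree is exactly what turns the distance hypothesis $\dist(s,v)\leq t$ into the required bounded-reachability condition, and it simultaneously shows why the construction is \PTIME, matching the contrast with Proposition~\ref{pct}(a).
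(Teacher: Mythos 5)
Your proof is correct and follows essentially the same route as the paper: the necessity direction observes that a path in the tree is a path in $G$, and the sufficiency direction builds a \bfs shortest-path tree from $s$ and restricts it to the observed nodes. The paper states this only as a brief verification sketch, so your write-up simply supplies the details of the same argument.
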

\vspace{1ex}

Indeed, one may verify the following: (a) if there is a node $(v_i,t_i) \in X$ where $\dist(s,v_i) > t_i$, there is no path satisfies
the time constraint and $T$ is empty; (b) if $\dist(s,v_i) \leq t_i$ for each node
$(v_i,t_i) \in X$, a \bfs tree rooted at $s$ with each node $v_i$ in $X$ as its internal node or leaf is a bounded consistent tree.
Thus, to determine whether there is a bounded consistent tree is in $O(|E|)$ time, via a \bfs traversal of $G$ from $s$.

\eat{
\stitle{Bounded consistent tree problem}.
Given a social graph $G$ and a partial observation $X$,
the {\em bounded consistent tree problem}, denoted as \bct, is to determine whether there exists
a bounded consistent tree $T$ \wrt $X$ in $G$, and find one such tree as a possible complete cascade structure.

The following result shows that a bounded consistent tree can be efficiently identified.

\vspace{1ex}
\begin{proposition}
\label{bct}
The \bct problem is in \PTIME.
\end{proposition}
\vspace{1ex}

We can prove the above result by providing an algorithm, denoted as~\conbct and shown in Fig.~\ref{fig-conbct},
which returns a consistent tree as a \bfs tree rooted at the source node $s$, or returns \kw{false} otherwise.
Given a graph $G$ and a partial observation $X$,
\conbct first initializes a tree $T$ with node set $V$ as the nodes in $X$ (line~1). It then
sort $X$ as a list $L$ following the ascending order of the time step $t_i$ in $X$ (line~2).
For each observation point $(v_i,t_i)$ in  $L$, \conbct checks if the distance from $s$ to $v_i$ is no greater than
$t_i$. If not, there exists no bounded consistent tree $T$, and \conbct returns $\emptyset$ (line~4).
Otherwise, it identifies a shortest path from $s$ to $v_i$ and merges it into the current tree $T$ (line~5).
It finally returns the tree $T$ as a bounded consistent tree (line~6).

\stitle{Correctness and Complexity}.
One may verify the following. (1) there exists a bounded consistent tree if and only if $\dist(s,v_i) \leq t_i$ for each node
$(v_i,t_i) \in X$. Indeed, (a) if there is a node $(v_i,t_i) \in X$ where $\dist(s,v_i) > t_i$, there is no path satisfies
the time constraint and $T$ is empty; (b) if $\dist(s,v_i) \leq t_i$ for each node
$(v_i,t_i) \in X$, a \bfs tree of $s$ to each node $v_i$ in $X$ serves as a bounded consistent tree.
(2) Using dynamic programming, \conbct is in $O(|E|)$ time.

\begin{figure}[tb!]
\vspace{-1.5ex}
\begin{center}
\fbox{
{\small
\begin{minipage}{3.3in}
\mat{0ex}{
\sstab {\sl Input:\/} \= graph $G$ and partial observation $X$.\\
{\sl Output:\/} a bounded consistent tree $T$ in $G$. \\
\stab \bcc \hspace{2ex} \= tree $T$ = $(V,E)$, where $V$ := $\{v| (v,t) \in X\}$, $E$ := $\emptyset$; \\
\icc \> /*sort X as a list L*/ \\
\> list $L$ := $\{(s,0),(v_1,t_1),\ldots, (v_k,t_k)\}$ \\
\> where $t_i \leq t_{i+1}$, $i \in [1,k-1]$; \\
\icc \> \For {\bf each} $(v_i,t_i) \in X$ \Do \\
\icc \> \hspace{2ex}\= \If $\dist(s,v_i) > t_i$ \Then \Return $\emptyset$; \\
\icc \> \> find a shortest path $\rho$ from $s$ to $v_i$; $T$ = $T \cup \rho$; \\
\icc \> \Return $T$ as a bounded consistent tree; \\
}
\vspace{-4.5ex}
\myhrule
\end{minipage}
}
}
\end{center}
\vspace{-2ex}
\caption{Algorithm~\conbct}
\label{fig-conbct}
\vspace{-1.5ex}
\end{figure}

}


Given a graph $G$ and a partial observation $X$,
the {\em minimum weighted bounded consistent tree} problem, denoted as \mwbct,
is to identify the bounded consistent tree $T^*_s$ \wrt $X$ with the 
minimum $- \log L_X(T^*_s)$ (see Section~\ref{sec:preliminary}).


\vspace{1ex}
\begin{theorem}
\label{mwbct}
Given a graph $G$ and a partial observation $X$,
the \mwbct problem is
\begin{itemize}
\vspace{-0.5ex}
\item[(a)]\NP-complete and \APX-hard; and
\vspace{-0.5ex}
\item[(b)] approximable within $O(|X|*{\frac{\log f_{min}}{\log f_{max}}})$,
where $f_{max}$ (resp. $f_{min}$) is the maximum (resp. minimum) probability by
the diffusion function $f$ over $G$.
\end{itemize}
\end{theorem}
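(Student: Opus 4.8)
The plan is to establish both claims by relating the \mwbct problem to the minimum directed Steiner tree (\mst) problem in two directions: a hardness reduction for part~(a), and an approximation-preserving transformation for part~(b).

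\textbf{Part~(a): hardness.} For \NP-completeness, membership in \NP is immediate, since given a candidate subtree one can verify in polynomial time that it is a tree rooted at $s$, that it contains every observed node, and that every root-to-$v_i$ path has length at most $t_i$ (a single \bfs-style check). For hardness and \APX-hardness, I would construct an \AFPR from the \mst problem, mirroring the reduction sketched for Proposition~\ref{pct}(b) but exploiting the slack afforded by the ``at most $t_i$'' bound. The idea is: given an \mst instance $\{G, V_r, V_s, r, w\}$, build a diffusion graph by designating $r$ as the source $s$, encoding each edge weight $w$ as a diffusion probability via $f(u,v) = e^{-w(u,v)}$ (so that $-\log L_X = \sum w$, aligning the weighted-tree objective with the Steiner weight), and placing every required node $v \in V_r$ into the partial observation as a point $(v, t)$ with $t$ chosen large enough (e.g.\ $t = |V|$) that the distance constraint $\dist(s,v) \le t$ never becomes binding. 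Under this encoding a bounded consistent tree is exactly a directed tree rooted at $s$ spanning all required nodes, so minimizing $-\log L_X$ coincides with minimizing the Steiner weight. Since the mapping preserves approximation ratios and \mst is \APX-hard, \mwbct is \APX-hard, and \NP-hardness follows. The main obstacle here is verifying that the time bounds can be set loosely enough to make the reduction clean while the tree/connectivity requirements still force a genuine Steiner structure.

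\textbf{Part~(b): approximation.} For the upper bound I would reduce \mwbct to a directed Steiner tree instance and invoke a known \mst approximation. First construct the backbone graph $G_b$ (as in the algorithm~\cmpwpct) to discard nodes that cannot lie on any feasible path. Then treat the observed nodes $V_r = \{v \mid (v,t) \in X\}$ as required terminals, the remaining nodes as Steiner nodes, and the log-likelihood $-\log f$ as the edge weight, and compute an approximate directed Steiner tree rooted at $s$. Feasibility of the distance constraints is guaranteed by Proposition~\ref{bct}, and any tree reaching all of $V_r$ automatically satisfies the ``at most $t_i$'' bound once $G_b$ has been pruned appropriately (one may further replace each root-to-$v_i$ path by a shortest path to repair the hop bound, as in~\cmpwpctsp). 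The key quantitative step is bounding the ratio: letting $T^*$ denote the optimal bounded consistent tree, the number of edges in the computed solution is within the Steiner approximation factor of $|T^*|$, and since every edge weight lies in $[\log f_{max}, \log f_{min}]$ (after negation), the ratio $-L_X(T)/-L_X(T^*)$ telescopes to $O(|X| \cdot \frac{\log f_{min}}{\log f_{max}})$, using $|X|$ as a bound on the number of terminals that drives the Steiner factor. The running time is dominated by the Steiner approximation, giving a polynomial bound.

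\textbf{Expected difficulty.} The main obstacle is part~(b): carefully accounting for how the hop constraints interact with the Steiner approximation so that the claimed $|X|$-dependent factor emerges, rather than a factor in the number of all nodes. The weight-ratio argument relating $\log f_{min}$ and $\log f_{max}$ is the delicate piece, since it must convert a bound on \emph{edge count} into a bound on \emph{total weight}; I would handle this by the standard device of charging each edge of the approximate solution against the optimum at the extreme weight values, exactly the estimate indicated for \cmpwpctsp in the perfect-tree case.
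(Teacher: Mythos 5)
Your part~(a) is essentially sound and close in spirit to the paper's: the paper establishes \NP-hardness by a reduction from exact $3$-cover (\extc) and \APX-hardness by an \AFPR from \mst, whereas you derive both from the \mst reduction alone; since an \AFPR from an \NP-hard optimization problem also yields \NP-hardness of the decision version, collapsing the two reductions into one is legitimate, and your encoding ($f(u,v)=e^{-w(u,v)}$, non-binding time bounds $t=|V|$) does make a bounded consistent tree coincide with a directed Steiner tree.

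Part~(b), however, has a genuine gap. You propose to prune to a backbone graph and then invoke ``a known \mst approximation,'' asserting that after pruning ``any tree reaching all of $V_r$ automatically satisfies the `at most $t_i$' bound.'' That is false: pruning retains every node that lies on \emph{some} feasible path, but a Steiner tree built in the pruned graph may still route to $v_i$ along a path of more than $t_i$ hops, since each intermediate node being individually reachable within budget does not constrain the length of the particular root-to-$v_i$ path the tree uses. The shortest-path repair you mention in passing would then alter the weight and requires its own ratio analysis, which you do not supply. Moreover, for \emph{directed} Steiner tree there is no constant-factor polynomial-time approximation to invoke; the only polynomial-time guarantee matching your target is the trivial $O(k)$ bound obtained by taking the union of minimum-weight root-to-terminal paths --- which is exactly what the paper's algorithm \cmpwbct does, but crucially \emph{inside the $t_k$-bounded \bfs \DAG of $s$}, where every root-to-$v$ path has length exactly $\dist(s,v)\le t_i$, so hop-feasibility is automatic rather than repaired afterwards. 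The ratio then follows by a per-terminal comparison: each selected path has weight at most $-|w|\log f_{min}$ while its counterpart in the optimum $T^*$ weighs at least $-|w^*|\log f_{max}$ with $|w|\le|w^*|$, and summing over the $|X|$ terminals gives $|X|\cdot\frac{\log f_{min}}{\log f_{max}}$. Your ``telescoping'' step conflates the Steiner approximation factor with this weight-ratio factor and, as written, does not yield the claimed bound.
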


We can prove Theorem~\ref{mwbct}(a) as follows. First, the \mwbct problem, as a decision problem, is
to determine whether there exists a bounded consistent tree $T$ with $-L_X(T)$
no greater than a given bound $B$. The problem is obviously in \NP. To show the lower bound, one may show there exists a
polynomial time reduction from the exact 3-cover problem (\extc).
Second, to see the approximation hardness, one may verify that there exists an \AFPR from the minimum directed steiner tree (\mst) problem.

We next provide a polynomial time algorithm, denoted as \cmpwbct, for the \mwbct problem.
The algorithm runs in {\em linear time} \wrt
the size of $G$, and with performance guarantee as in Theorem~\ref{mwbct}(b).

\begin{figure}[tb!]
\vspace{-1.5ex}
\begin{center}
\fbox{
{\small
\begin{minipage}{3.3in}
\mat{0ex}{
\sstab {\sl Input:\/} \= graph $G$ and partial observation $X$.\\
{\sl Output:\/} a bounded consistent tree $T$ in $G$. \\
\stab \bcc \hspace{2ex} \= tree $T$ = $(V_t,E_t)$, where $V_t$ := $\{s| (s,0)\in X\}$, $E_t$ := $\emptyset$; \\
\icc \> compute $t_k$ bounded \bfs \dag $G_d$ of $s$ in $G$; \\
\icc \> \For {\bf each} $t_i \in [t_1, t_k]$ \Do \\
\icc \> \hspace{2ex}  \For  {\bf each} node $v$ where $(v, t_i) \in X$ and $l(v)$ = $i$ \Do \\
\icc \> \hspace{4ex} \If $i > t_i$ \Then \Return $\emptyset$; \\
\icc \> \hspace{4ex} find a path $\rho$ from $s$ to $v$ with the \\
\> \hspace{4ex} minimum weight $w(\rho)$ = $ -\Sigma \log f(e)$ for each $e \in \rho$; \\
\icc \> \hspace{4ex} $T$ = $T \cup \rho$; \\
\eat{
\icc \> \For {\bf each} $(v_i,t_i) \in X$ \Do \\
\icc \> \hspace{2ex}\= \If $\dist(s,v_i) > t_i$ \Then \Return $\emptyset$; \\
\icc \> \> path $\rho$ := $\emptyset$ with weight $w(\rho)$ := $\infty$; \\
\icc \> \> \For \= {\bf each} $v \in V_t$ \Do \\
\icc \> \> \> find the path $\rho'$ from $v$ to $v_i$ with the \\
\>\>\> minimum weight $w(\rho')$ = $ \Sigma \log f(e)$ for each $e \in \rho'$; \\
\icc\>\> \If $w(\rho) \geq w(\rho')$ \Then
\icc \> \>\> $T$ = $T \cup \rho$; \\
}
\icc \> \Return $T$ as a bounded consistent tree; \\
}
\vspace{-5ex}
\myhrule
\end{minipage}
}
}
\end{center}
\vspace{-2ex}
\caption{Algorithm~\cmpwbct: searching bounded consistent trees via top-down strategy}
\label{fig-cmpwbct}
\vspace{-2ex}
\end{figure}

\stitle{Algorithm}. The algorithm~\cmpwbct is illustrated in Fig.~\ref{fig-cmpwbct}.
Given a graph $G$ and a partial observation $X$, the algorithm first initializes
a tree $T$ = $(V_t,E_t)$ with the single source node $s$ (line~1).
It then computes the {\em $t_k$} bounded \bfs directed acyclic graph (\DAG)~\cite{Gutin08} $G_d$
 of the source node $s$, where $t_k$ is
the maximum time step of the observation points in $X$, and $G_d$ is a
\DAG~induced by the nodes and edges visited by a \bfs traversal of $G$ from $s$ (line~2).
Following a top-down strategy, for each node $v$ of $(v,t) \in X$, ~\cmpwbct then (a) selects a path $\rho$ with the minimum
$\Sigma \log f(e)$ from $s$ to $v$,
 and (b) extends the current tree $T$ with the path $\rho$ (lines~3-7).  If for some observation point $(v,t) \in T$,
$\dist(s,v) > t$, then~\cmpwbct returns $\emptyset$ as the tree $T$ (line~5).
Otherwise, the tree $T$ is returned (line~8) after all the observation points in $X$
are processed.

\etitle{Correctness and complexity}.
One may verify that algorithm~\cmpwbct either correctly computes a bounded consistent tree $T$, or
returns $\emptyset$. For each node in the observation point $X$,
there is a path of weight selected using a greedy strategy, and the top-down strategy
guarantees that the paths form a consistent tree. 
The algorithm runs in time $O(|E|)$, since it visits
each edges at most once following a \bfs traversal.

We next show the approximation ratio in Theorem~\ref{mwbct}(b).
Observe that for a single node $v$ in $X$, (a) the total weight of the
path $w$ from $s$ to $v$ is no greater than $-|w|\log f_{min}$, where
$|w|$ is the length of $w$; and (b) the weight of the counterpart of $w$ in $T^*$,
denoted as $w'$, is no less than $-|w^*|\log f_{max}$.
Also observe that $|w| \leq |w^*|$. Thus, $w/w^* \leq \frac{\log f_{min}}{\log f_{max}}$.
As there are in total $|X|$ such nodes, $L_X(T)/L_X(T^*) \leq |X| \frac{w}{w^*} \leq |X|\frac{\log f_{min}}{\log f_{max}}$. Theorem~\ref{mwbct}(b) thus follows.


\stitle{Minimum bounded consistent tree}.
We have considered the likelihood function as a quantitative metric 
for the quality of the bounded consistent trees.
As remarked earlier, one may simply want to identify the bounded consistent trees of the {\em minimum} size.
Given a social graph $G$ and a partial observation $X$,
the {\em minimum bounded consistent tree problem}, denoted as \mbct, is to identify the
bounded consistent tree with the minimum size, \ie the total number of nodes and edges.
The \mbct problem is a special case of \mwbct,
and its main result is summarized as follows.

\begin{proposition}
\label{mbct}
The \mbct problem is (a) \NP-complete,
(b) \APX-hard,
and (c) approximable within $O(|X|)$, where
$|X|$ is the size of the partial observation $X$. 
\end{proposition}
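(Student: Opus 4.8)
The plan is to treat \mbct as the unit-weight instance of \mwbct and to recycle the reductions and the algorithm already built for Theorem~\ref{mwbct}. Membership in \NP is immediate: a candidate bounded consistent tree is a polynomial-size witness, and checking that it is a subtree of $G$ rooted at $s$, that each $(v_i,t_i)\in X$ is reached within $t_i$ hops, and that its size meets the target bound all take polynomial time. For \NP-hardness I would reduce from \extc, reusing the gadget behind Theorem~\ref{mwbct}(a): create the source $s$, a node $c_j$ for each $3$-set and a node $x_i$ for each element, arcs $(s,c_j)$ and $(c_j,x_i)$ whenever $u_i\in C_j$, and observation points $(s,0)$ together with $(x_i,2)$ for every element. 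Every bounded consistent tree reaches each $x_i$ only through a length-$2$ path $s\to c_j\to x_i$, so the chosen middle nodes form a cover, and minimizing the (unit-weight) tree size is exactly minimizing the number of chosen $3$-sets; an exact cover of size $q$ yields the minimum tree. Because this gadget uses only unit weights, it establishes hardness for \mbct directly, not merely for \mwbct.

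For \APX-hardness I would supply an \AFPR from the minimum directed Steiner tree problem restricted to unit weights, paralleling the reduction used for \mwbct. Given a unit-weight \mst instance $\{G,V_r,V_s,r,w\}$ (so $w\equiv 1$), I keep $G$, set the source to $r$, and add an observation point $(v,|V|)$ for each required node $v\in V_r$. The slack hop bound $|V|$ renders every time constraint vacuous, so the bounded consistent trees are precisely the arborescences rooted at $r$ that span $V_r$, and the \mbct size is an affine function of the Steiner-tree edge count whose leading term tracks the optimum; hence the map is approximation preserving. Unit-weight directed Steiner tree is itself \APX-hard --- on a bidirected graph it coincides with the unit-weight undirected Steiner tree, which is \APX-hard --- so \mbct inherits \APX-hardness.

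For the upper bound I would instantiate the algorithm \cmpwbct and its analysis (Theorem~\ref{mwbct}(b)) with uniform weights. Overriding the diffusion function so that $-\log f(e)=1$ on every edge makes the weighted objective $-L_X$ equal to the edge count, so \cmpwbct now approximately minimizes the \mbct size; since $f_{min}=f_{max}$ in this instance, the guarantee $O(|X|\cdot\frac{\log f_{min}}{\log f_{max}})$ collapses to $O(|X|)$. The same bound also follows directly: \cmpwbct unions a minimum-hop path from $s$ to each of the $|X|$ observed nodes, so the returned tree has at most $\sum_i \dist(s,v_i)$ edges, while any feasible tree $T^\ast$ contains a path from $s$ to every $v_i$ and therefore satisfies $|T^\ast|\ge \dist(s,v_i)$; combining these gives size at most $|X|\cdot|T^\ast|$.

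I expect part (b) to be the main obstacle. The difficulty is not in the construction but in certifying that the source problem stays \APX-hard \emph{after} weights are forced to one, and that relaxing all hop bounds leaves the optimum of the constructed \mbct instance unchanged, so that the affine size relation yields a genuine \AFPR. Parts (a) and (c) are comparatively routine, since they reuse the \extc gadget and the \cmpwbct analysis almost verbatim.
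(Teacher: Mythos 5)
Your proposal is correct, and for parts (b) and (c) it follows the paper's own route: the paper establishes \APX-hardness of \mbct by a reduction from the minimum directed Steiner tree (\mst) problem, and obtains the $O(|X|)$ ratio exactly as you do, by running \cmpwbct on the unit-weight instance (your additional direct counting argument --- at most $\sum_i \dist(s,v_i)$ edges versus a lower bound of $\max_i \dist(s,v_i)$ --- is a nice self-contained justification that the paper leaves implicit). The one place you diverge is part (a): the paper derives \NP-completeness of \mbct from the \mst reduction as well, whereas you reduce from \extc by reusing the unit-weight gadget behind Theorem~\ref{mwbct}(a). Both work; your choice has the small advantage of making the decision-version hardness elementary and independent of Steiner-tree machinery, while the paper's choice lets a single reduction serve both (a) and (b). Your worry about part (b) is legitimate but resolvable as you suggest: the unit-weight (unweighted) Steiner tree problem is itself \APX-hard, the time bounds $(v,|V|)$ are indeed vacuous so feasible trees coincide with Steiner arborescences spanning $V_r$, and the paper's size measure (nodes plus edges) is the affine function $2\cdot(\#\text{edges})+1$ of the edge count, which preserves approximation ratios up to a constant, so the map is a genuine \AFPR.
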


Proposition~\ref{mbct}(a) and~\ref{mbct}(b) can both be shown by constructing reductions
from the \mst problem, which is \NP-complete and \apx-complete~\cite{Vazirani:2001:book}.

%
Despite of the hardness, the problem can be approximated within $O(|X|)$ in polynomial time,
by applying the algorithm~\cmpwbct over an instance where each edge has a unit weight. 
This completes the proof of Proposition~\ref{mbct}(c).

\eat{
we next present an algorithm, denoted as~\cmpmbct, which approximates
the problem~\mbct within $O(|X|)$ in polynomial time. The idea is to transform an
instance of \mbct to the minimum directed steiner tree problem. By showing that
for any given instance of \mbct, the approximated
 solution for the minimum directed steiner tree is within $O(|X|)$ times of the optimal
solution of the instance, Proposition~\ref{mbct}(3) follows.
}

\eat{

In this section, we formally define the problem, \emph{\underline{M}maximum \underline{L}likelihood \underline{C}consistent \underline{T}ree} (MLCT), and theoretically analyze how hard this problem is.

\subsection{Problem Description}
\label{sec:ps:description}

\begin{definition}
\textbf{Maximum Likelihood Consistent Tree.} Given a social network $G$ with the diffusion function $f$ and a \emph{partial} observation $X$ over a cascade, the goal is to find the consistent tree $\theta^*_s$ of $X$ rooted from $s$ with the maximum likelihood.
\end{definition}

In the problem setting, we make two assumptions for the ease of discussion: one is the diffusion function $f$ is known and the other is the source node $s$ of the cascade is given. In practice, we can learn the diffusion function from the log of cascades~\cite{}. In section~\ref{}, we show that the latter assumption can be lifted.

For the rest of this paper, following common practice, we opt to work with log-likelihood such that
\[
\log L_X(\theta_s) = \sum_{\langle u, v \rangle \in E_{\theta_s}} \log f(u, v)
\]

In Section~\ref{sec:ps:hardness}, we demonstrate that it is NP-hard to find the consistent tree $\theta^*_s$ that maximizes the log-likelihood function.

\subsection{NP-hardness of MLCT}
\label{sec:ps:hardness}
For the ease of discussion, we introduce the \emph{inverse} log-likelihood function. Consider a consistent tree $\theta_s$. The \emph{inverse} log-likelihood function $L^-_X(\theta_s)$ is
\[
L^-_X(\theta_s) = - L_X(\theta_s) = \sum_{\langle u, v \rangle \in E_{\theta_s}} - \log f(u, v)
\]

\begin{proposition}
$\theta^*_s$ is the consistent tree that minimizes $L^-_X$ if and only if $\theta^*_s$ maximizes $L_X$.
\label{proposition:x-consistent-tree}
\end{proposition}
For short, we refer the problem to find the consistent tree that minimizes the \emph{inverse} log-likelihood function as the \emph{inverse} problem in this section. It is clear that if the \emph{inverse} problem is NP-hard , it is NP-hard to find the maximum likelihood consistent tree.

To show the NP-hardness of MLCT, we make the reduction from the decision problem of \emph{exact cover by 3-set} (X3C) to the decision problem of the \emph{inverse} problem. The decision problem of the \emph{inverse} problem is described as follows: Given $G$, $f$ and $X$, does there exist a consistent tree $\theta_s$ such that $\sum_{\langle u, v \rangle \in E_{\theta_s}} - \log f(u, v) \leq k$? The decision problem of X3C is: Given $A = \{e_1, e_2, ..., e_{3n}\}$ is the set of elements and $S = \{S_1, S_2, ..., S_m\}$ is a set of subsets where $S_i \subset A$ and $\vert S_i \vert = 3$, does there exist $S' \subset S$ with $\vert S' \vert = n$ such that all elements of $A$ are covered and exactly covered once? \emph{Exact cover by 3-set} is NP-hard~\cite{}.

\begin{theorem}
The \emph{maximum likelihood consistent tree} problem is NP-hard.
\label{thm:ml-np-hard}
\end{theorem}
\begin{proof}
Suppose $M$ is an arbitrary instance of X3C decision problem, we make an reduction from $M$ to $M'$ as follows. We first construct $G = (V, E)$ where $V = \{s\} \cup S \cup A$ and $E = E_1 \cup E_2$ with $E_1 = \{ \langle s, S_i \rangle \mid S_i \in S \}$ and $E_2 = \{ \langle S_i, e_j \rangle \mid e_j \in S_i \}$. The diffusion function $f$ is constructed as $- \log f(u, v) = 1$ for any $\langle u, v \rangle \in E$. The corresponding decision problem $M'$ is: Given $G$, $X = \{(e_j, 2) \mid e_j \in A \}$ and $f$, does there exist a consistent tree $\theta_s$ rooted from $s$ such that $\sum_{ \langle u, v \rangle \in E_{\theta_s}} - \log f(u, v) \leq 4n$?

It is clear that the reduction runs in polynomial time. If $M$ is accepted, it is easy to see that $M'$ is accepted. Finally, we demonstrate that if $M'$ is accepted, then $M$ is accepted. Since the overall cost is $4 n$, there are $n$ $S_i$'s and $3 n$ $e_j$'s selected. For any pair of selected $S_i$'s, it is impossible that they have common descendants. Assume that $S_{i_1}$ and $S_{i_2}$ are in the consistent tree having common descendants. Each $e_j$ has only one parent implying there exist edges of $S_{i_1}$ or $S_{i_2}$ that are not in the consistent tree. Therefore, the number of nodes in depth $2$ is strictly less than $3 n$ such that the assumption contradicts the fact that $M'$ is accepted. Since any pair of $S_i$'s have no common descendants, $M$ is accepted.

In all, the above polynomial reduction implies that the \emph{inverse} problem is at least as hard as exact cover by 3-set such that the MLCT problem is NP-hard.
\end{proof}

The NP-hardness of MLCT implies that it would be extremely difficult to propose an efficient algorithm that returns the optimal solution. Moreover, we demonstrate the inapproximability of MLCT in Section~\ref{sec:ps:inapproximability}

\subsection{Inapproximability of MLCT}
\label{sec:ps:inapproximability}

To show the inapproximability of the \emph{maximum likelihood consistent tree} problem, we prove that if there exist any polynomial-time algorithms with any constant approximation ratio, we can solve \emph{exact cover by 3-set} in polynomial time.

\begin{theorem}
If $\mathbf{P} \neq \mathbf{NP}$, then for every $\alpha \geq 1$ there is no polynomial-time, $\alpha$-approximation algorithm for MLCT.
\end{theorem}
\begin{proof}
Assume that $\mathbf{P} \neq \mathbf{NP}$. For the sake of contradiction, assume that $M$ is a polynomial-time, $\alpha$-approximation algorithm of the \emph{inverse} problem for some constant $\alpha \geq 1$. In the following, we construct an algorithm $M'$ that decides Exact Cover by 3-set in polynomial time, implying $\mathbf{P} = \mathbf{NP}$, which contradicts the assumption. The algorithm $M'$ is constructed as follows on input $A = \{e_1, ..., e_{3n}\}$ and $S = \{S_1, S_2, ..., S_m \}$ where $S_i \subset A$ and $\vert S_i \vert = 3$ for $1 \leq i \leq m$.
\begin{itemize}
\item Assume $V = \{s \} \cup S \cup A$ and $E = E_1 \cup E_2$ with $E_1 = \{ \langle s, S_i \rangle \mid S_i \in S \}$ and $E_2 = \{ \langle S_i, e_j \rangle \mid e_j \in S_i \}$. We then construct $f$ as follows:
\[
- \log f(u, v) = \left \{ \begin{array}{ll}
1 & \text{if $\langle u, v \rangle \in E$} \\
4 \alpha n + 1 & \text{if $\langle u, v \rangle \notin E$}
\end{array} \right.
\]
\item Given $X = \{(e_j, 2) \mid e_j \in A \}$, run $M$ on input $G = (V, E)$
\item Let $\theta_s$ be the solution generated by $M$. If $L^-_X(\theta_s) \leq 4 \alpha n$ then accept, else reject.
\end{itemize}
It is clear that $M'$ runs in polynomial time, since $M$ runs in polynomial time. Furthermore, we show that $M'$ decides the \emph{exact cover by 3-set}.
\begin{itemize}
\item Suppose that the input instance has a solution for exact cover by 3-set, then there exists $\theta_s$ with $L^-_X(\theta_s) = 4 n$, so $M$ returns a solution $\theta'_s$ with $L^-_X(\theta'_s) = 4 \alpha n$, since $M$ is an $\alpha$-approximation algorithm. Thus, $M'$ accepts the input instance.
\item Assume that there are no solutions for the input instance. Therefore, every consistent tree $\theta_s$ has at least one edge $\langle u,v \rangle$ with $- \log f(u, v) = 4 \alpha n + 1$ such that $L^-_X(\theta_s) \geq 4 \alpha n + 1 > 4 \alpha n$. Thus, no matter which consistent tree $\theta_s$ generated by $M$, it satisfies $L^-_X(\theta_s) > 4 \alpha n$, and $M'$ rejects the input instance.
\end{itemize}
\end{proof}

}

\newcommand{\lpbct}{\kw{BCT_{lp}}}
\newcommand{\lppct}{\kw{PCT_{lp}}}
\newcommand{\rbct}{\kw{BCT_r}}
\newcommand{\rpct}{\kw{PCT_r}}
\newcommand{\gpct}{\kw{PCT_{g}}}
\newcommand{\precision}{\kw{prec}}
\newcommand{\recall}{\kw{rec}}

\section{Experiments}
\label{experiment}

We next present an experimental study of our proposed methods.
Using both real-life and synthetic data, we conduct
three sets of experiments to evaluate
(a) the effectiveness of the proposed algorithms,
(b) the efficiency and the scalability of~\cmpwpct and~\cmpwbct.

\begin{figure*}[tb!]
\vspace*{-3ex}
\begin{center}
\subfigure[PCT@Enron:~\precision]{\label{fig-pct-enron-prec} 
\includegraphics[width=4.1cm,height=3.0cm]{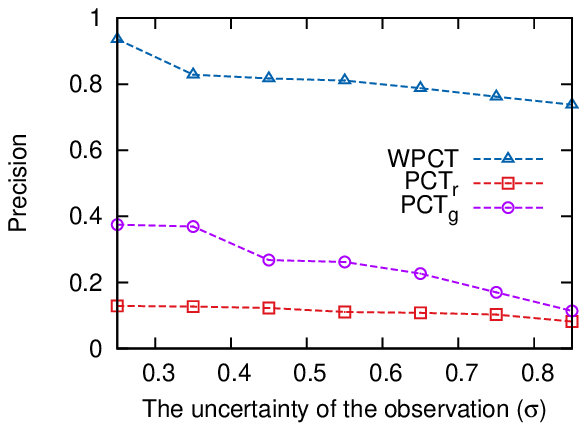}}
\hfill
\subfigure[PCT@Enron:~\recall]{\label{fig-pct-enron-rec}
\includegraphics[width=4.1cm,height=3.0cm]{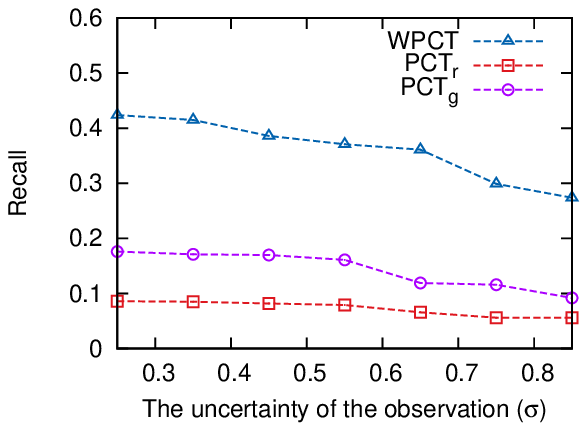}}
\hfill
\subfigure[BCT@Enron:~\precision]{\label{fig-bct-enron-prec} 
\includegraphics[width=4.1cm,height=3.0cm]{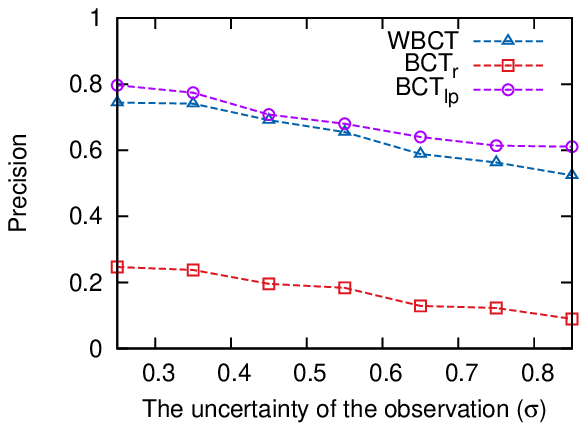}}
\hfill
\subfigure[BCT@Enron:~\recall]{\label{fig-bct-enron-rec}
\includegraphics[width=4.1cm,height=3.0cm]{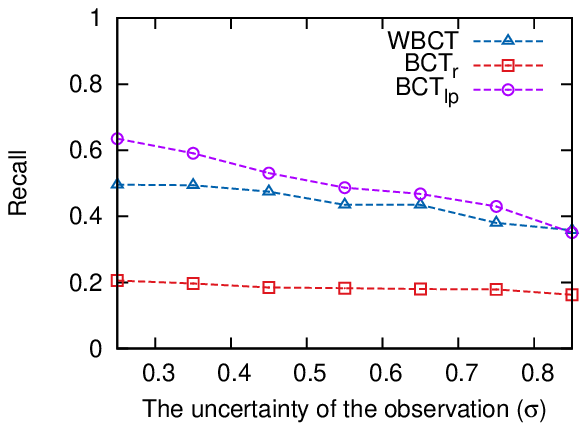}}
\hfill
\subfigure[PCT@RT:~\precision]{\label{fig-pct-rt-prec} 
\includegraphics[width=4.1cm,height=3.0cm]{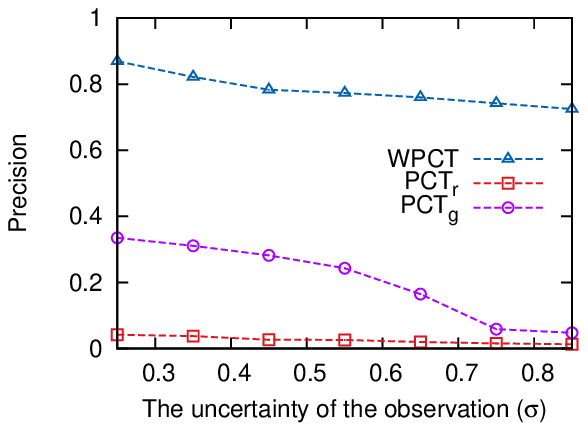}}
\hfill
\subfigure[PCT@RT:~\recall]{\label{fig-pct-rt-rec}
\includegraphics[width=4.1cm,height=3.0cm]{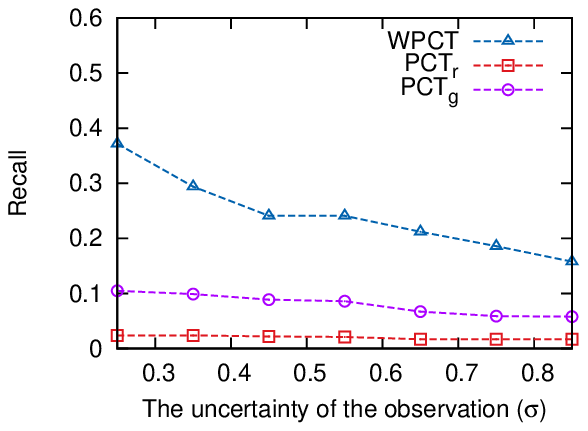}}
\hfill
\subfigure[BCT@RT:~\precision]{\label{fig-bct-rt-prec} 
\includegraphics[width=4.1cm,height=3.0cm]{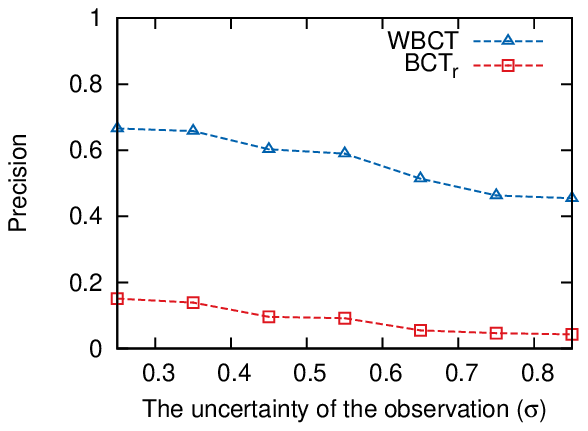}}
\hfill
\subfigure[BCT@RT:~\recall]{\label{fig-bct-rt-rec}
\includegraphics[width=4.1cm,height=3.0cm]{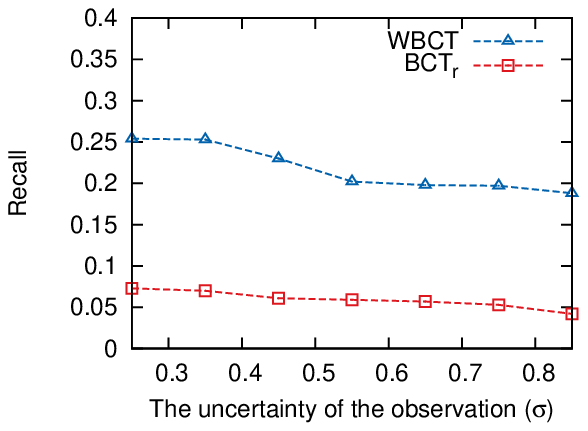}}
\hfill
%
\end{center}
\vspace*{-2ex}
\caption{The \precision and \recall of the inference algorithms over Enron email cascades and Retweet cascades}
\label{fig-perf} 
\vspace{-3ex}
\end{figure*}

\stitle{Experimental setting}. We used real-life data to
evaluate the effectiveness of our methods, and
synthetic data to conduct an in-depth analysis on scalability by varying the parameters of cascades and partial observations.

\etitle{(a) Real-life graphs and cascades}. We used the following real-life datasets.
(i) {\em Enron email cascades}. The dataset of \emph{Enron Emails}~\footnote{http://www.cs.cmu.edu/~enron/}
consists of a social graph of $86,808$ nodes and $660,642$ edges, where
a node is a user, and two nodes are connected if there is an email message between them.
We tracked the \emph{forwarded} messages of the same subjects and
obtained $260$ cascades of depth no less than $3$ with more than $8$ nodes.
(ii) {\em Retweet cascades} (RT). The dataset of \emph{Twitter Tweets}~\footnote{http://snap.stanford.edu/data/twitter7.html}~\cite{WSDM:Yang:2011a}
contains more than $470$ million posts from more than $17$ million users, 
covering a period of 7 months from June 2009. We extracted
the retweet cascades of the identified {\em hashtags}~\cite{WSDM:Yang:2011a}. To guarantee that a cascade represents the propagation of a single hashtag, we removed those retweet cascades containing multiple hashtags. In the end, we
obtain $321$ cascades of depth more than $4$, with node size ranging from $10$ to $81$.
Moreover, we used the EM algorithm from~\cite{Saito:2008a:KES08} to estimate the diffusion function.

\eat{
\etitle{Enron email cascades}. The \emph{Enron Email}~\footnote{http://www.cs.cmu.edu/~enron/}
dataset contains $500,000$ email messages, recording the email interaction among people. We extract the underlying communication network and email cascades from those messages. For the network graph, a node is a sender or a recipient of a message and two nodes are connected if there is at least one email message between them. The resulting graph contains $86,808$ nodes and $660,642$ edges. For the cascades, we track the \emph{forwarded} messages of the same subjects. By filtering out extremely small cascades and cascades of depth $2$, we obtain $232$ cascades of depth $3$ and more than $6$ nodes and $25$ cascades of depth $4$ and more than $8$ nodes.

\stitle{Retweet cascades}. The \emph{Twitter Tweets} dataset~\cite{WSDM:Yang:2011a}
contains Twitter posts from more than $7$ million users covering a 7 month period from June 6 2009
to December 31 2009 and the underlying network graph has more than $134.8$ million edges. We identify twitter \emph{hashtags} in tweets and extract retweet cascades of those \emph{hashtags}. By filtering out small cascades and cascades of depth 2, we obtain $242$ cascades of depth $3$, $63$ cascades of depth $4$ and $23$ cascades of depth $5$. Each of these cascades has $10$ nodes at least and $81$ nodes at most. To estimate the diffusion function for the network graph, we use the EM algorithm from~\cite{Saito:2008a:KES08}. For all cascades, we use the timestamps associated with the nodes to produce the discrete time steps.
}

\etitle{(b) Synthetic cascades}.  We generated a set of synthetic cascades unfolding in an anonymous Facebook social graph~\footnote{http://current.cs.ucsb.edu/socialnets}, which exhibits properties such as power-law degree distribution, high clustering coefficient and positive assortativity~\cite{Wilson:2009:EuroSys09}. The diffusion function is constructed by randomly assigning real numbers between $0$ and $1$ to edges in the network. The generating process is controlled by size $|T|$. We randomly choose a node as the source of the cascade. By simulating the diffusion process following the independent cascade model, we then generated cascades \wrt $|T|$ and assigned time steps. 

\etitle{(c) Partial observation.} For both real life and synthetic cascades,
we define {\em uncertainty} of a cascade $T$ as $\sigma$ = $1- \frac{|X|}{|V_T|}$,
where $|V_T|$ is the size of the nodes in $T$, and $|X|$ is the size of the partial
observation $X$. We remove the nodes from the given cascades
until the uncertainty is satisfied, and collect the remaining nodes and their time steps as $X$.

\etitle{(d) Implementation}.
We have implemented the following in C++:
(i) algorithms~\cmpwpct, and~\cmpwbct;
(ii) two linear programming algorithms~\lppct and~\lpbct,
which identify the optimal weighted bounded consistent trees and the optimal
perfect consistent trees using linear programming, respectively;
(iii) two randomized algorithms~\rpct and~\rbct, which are developed to
randomly choose trees from given graphs.~\rpct is developed using
a similar strategy for~\cmpwpct, especially for each level the steiner forest is
randomly selected (see Section~\ref{perfect-infer}); as~\cmpwbct does,~\rbct runs on bounded~\bfs directed acyclic graphs, but randomly selects edges.
(iv) to verify various implementations of~\cmpwpct, an algorithm~\gpct is
developed by using a greedy strategy to choose the steiner forest for each level
(see Section~\ref{perfect-infer}).
We used LP\_solve 5.5~\footnote{http://lpsolve.sourceforge.net/5.5/} as the linear programming solver.

We used a machine powered by an {\small Intel(R)} Core
 $2.8$GHz CPU and $8$GB of RAM, using
Ubuntu 10.10.  Each experiment was run by $10$ times and
 the average is reported here.

\begin{figure*}[tb!]
\begin{center}
\subfigure[Varying $|T|$]{\label{fig-bcsize}
\includegraphics[width=4.1cm,height=3.0cm]{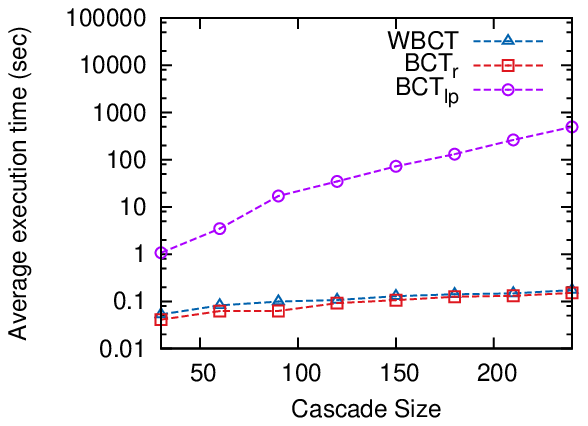}}
\hfill
\eat{\subfigure[Varying $d$]{\label{fig-bcdepth}
\includegraphics[width=4.1cm,height=3.0cm]{depth_bounded.eps}}
\hfill}
\subfigure[Varying $\sigma$]{\label{fig-bunc}
\includegraphics[width=4.1cm,height=3.0cm]{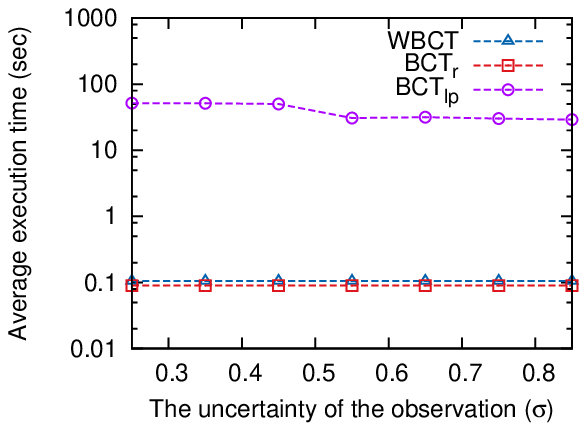}}
\subfigure[Varying $|T|$]{\label{fig-pcsize}
\includegraphics[width=4.1cm,height=3.0cm]{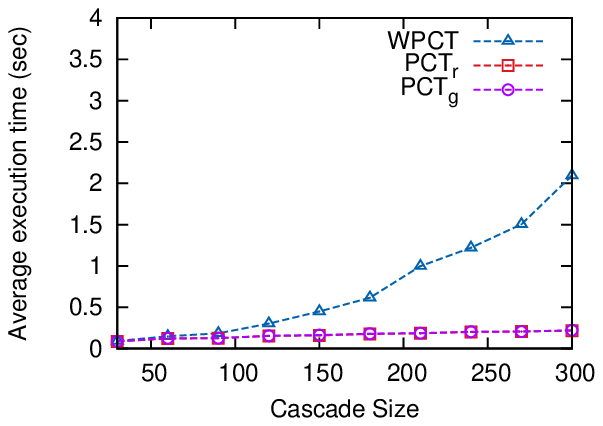}}
\hfill
\eat{\subfigure[Varying $d$]{\label{fig-pcdepth}
\includegraphics[width=4.1cm,height=3.0cm]{depth_perfect.eps}}
\hfill}
\subfigure[Varying $\sigma$]{\label{fig-punc}
\includegraphics[width=4.1cm,height=3.0cm]{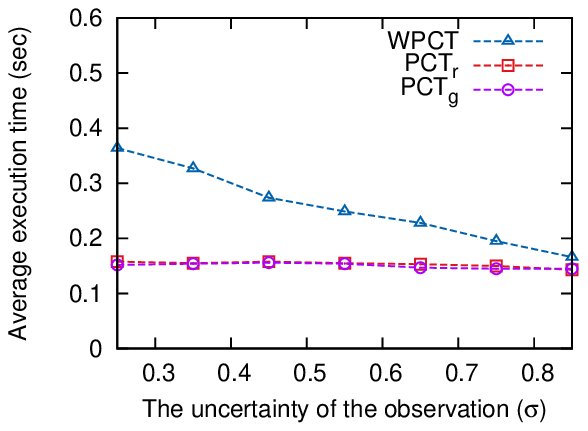}}
\end{center}
\vspace{-2ex}
\caption{Efficiency and scalability over synthetic cascades} \label{fig-efficiency}
\vspace{-3ex}
\end{figure*}

\begin{table}[tb!]
\vspace{-1ex}
\begin{small}
\begin{center}
\begin{tabular}{|r|c||c|c|c|c|}
\cline{1-6} \multicolumn{1}{|c||}{} & \multicolumn{1}{c||}
{} &
\multicolumn{2}{c|}{\textbf{Enron}} &
\multicolumn{2}{c|} {\textbf{Twitter}}\\
\cline{3-6} \multicolumn{1}{|c||}{\raisebox{1.5ex}[0pt]{\textbf{Algorithms}}}
& \multicolumn{1}{|c||}{\raisebox{1.5ex}[0pt]{\textbf{Precision}}}
& $d$=$3$ & $d$=$4$ & $d$ = $4$ & $d$ = $5$ \\

\hline\hline \multicolumn{1}{|c||}{} & \multicolumn{1}{c||}{$\precision_v$}
& $100 \%$ & $100\%$ & $97.2\%$ & $93.2\%$ \\

\cline{2-6} \multicolumn{1}{|c||}{\raisebox{1.5ex}[0pt]{\textbf{\cmpwpct}}} &
\multicolumn{1}{c||} {$\precision_e$} & $78.2\%$ & $82.4\%$ & $86.1\%$ & $82.6\%$ \\
\hline

\hline\hline \multicolumn{1}{|c||}{} & \multicolumn{1}{c||}{$\precision_v$}
& $100 \%$ & $70.1\%$ & $73.6\%$ & $66.1\%$ \\

\cline{2-6} \multicolumn{1}{|c||}{\raisebox{1.5ex}[0pt]{\textbf{\cmpwbct}}} &
\multicolumn{1}{c||} {$\precision_e$} & $69\%$ & $55.7\%$ & $60.6\%$ & $41.7\%$ \\
\hline
\end{tabular}
\end{center}
\end{small}
\vspace{-1ex}
\caption{$\precision_v$ and $\precision_e$ over real cascades}
\label{tab-realdata}
\vspace{-5ex}
\end{table}

\stitle{\bf Experimental results}. We next present our findings.

\vspace{-0.8ex}
\etitle{Effectiveness of consistent trees}.
In the first set of experiments, using real life cascades,
we investigated the accuracy and the efficiency of our cascade inference algorithms.

\stab
(a) Given a set of real life cascade $\textbf{T}$ = $\{T_1, \ldots, T_k\}$,
for each cascade $T_i = (V_{T_i},E_{T_i}) \in \textbf{T}$,
we computed an inferred cascade ${T_i}'$ = $(V_{{T_i}'},E_{{T_i}'})$ according to
a partial observation with uncertainty $\sigma$. Denote the nodes in
the partial observation as $V_X$. We evaluated the
{\em precision} as $\precision$ = $\frac{\Sigma (|(V_{{T_i}'}\cap V_{T_i}) \setminus V_X|)}{\Sigma (|V_{{T_i}'} \setminus V_X)|}$,
and $\recall$ = $\frac{\Sigma (|(V_{{T_i}'}\cap V_{T_i}) \setminus V_X|)}{\Sigma (|V_{T_i} \setminus V_X)|}$.
Intuitively,~\precision is the fraction of inferred nodes that are missing from $T_i$, while~\recall is the fraction of missing nodes that are inferred by ${T_i}'$.
%

For Enron email cascades, Fig.~\ref{fig-pct-enron-prec} and Fig.~\ref{fig-pct-enron-rec}
show the accuracy of~\cmpwpct,~\gpct and~\rpct for inferring cascades, while
$\sigma$ is varied from $0.25$ to $0.85$.~\lppct does not scale over the Enron dataset and thus is not shown. (i)~\cmpwpct
outperforms~\gpct and~\rpct on both~\precision and~\recall.
(ii) When the uncertainty increases, both the~\precision and~\recall of the three algorithms decrease.
In particular,~\cmpwpct successfully infers cascade nodes with~\precision
no less than $70\%$ and~\recall no less than $25\%$ even when $85\%$ of the nodes in the
cascades are removed.
Using the same setting, the performance of~\cmpwbct,~\lpbct and~\rbct are shown in Fig.~\ref{fig-bct-enron-prec} and Fig.
~\ref{fig-bct-enron-rec}, respectively. (i) Both~\lpbct and~\cmpwbct outperform~\rbct, and their~\precision and~\recall decrease while the uncertainty increases. (ii)~\lpbct has better performance than~\cmpwbct. In particular, both~\lpbct and~\cmpwbct successfully infer the cascade nodes with the~\precision no less than $50\%$ and with the~\recall no less than $25\%$, even when $85\%$ of the nodes in the cascades are removed.

For retweet cascades, the~\precision and the~\recall of~\cmpwpct,~\gpct and~\rpct are
shown in Fig.~\ref{fig-pct-rt-prec} and in Fig.~\ref{fig-pct-rt-rec}, respectively.
While the uncertainty increases from $0.25$ to $0.85$, (i)~\cmpwpct outperform~\rpct and~\gpct,
and (ii) 
the performance of all the algorithms decreases.
In particular,~\cmpwpct successfully infers the nodes with the~\precision more than $80\%$ and the~\recall more than $35\%$,
while the uncertainty is $25\%$.
Similarly, the~\precision and the~\recall of~\cmpwbct and~\rbct are presented in Fig.~\ref{fig-bct-rt-prec} and Fig.~\ref{fig-bct-rt-rec}, respectively.
As~\lpbct does not scale on retweet cascades, its performance is not shown.
While the uncertainty $\sigma$ increases, the~\precision and the~\recall of the algorithms decrease.
For all $\sigma$,~\cmpwbct outperforms~\rbct;
in particular,~\cmpwbct correctly infers the nodes with~\precision no less than $60\%$ and~\recall no less than $25\%$,
when $\sigma$ is $25\%$.



\stab
(b) To further evaluate the structural similarity of $T_i$ and ${T_i}'$ as described in (a), 
we also evaluate
(i) {\em $\precision_v$} = $\frac{|V''|}{|V'|}$ for nodes $V'$ = $(V_{{T_i}'}\cap V_{T_i}) \setminus V_X$,
where $V''\in V'$ are the nodes with the same {\em topological order} in both $T_i'$ and $T_i$,
and
(ii) {\em $\precision_e$} = $\frac{|E'|}{|E_{{T_i}'}|}$ for $E'$ = $E_{T_i}\cap E_{{T_i}'}$, following
the metric for measuring graph similarity~\cite{Raymond02rascal}.
The average results
are as shown in Table~\ref{tab-realdata}, for $\sigma$ =$50\%$, and the cascades
of fixed depth. As shown in the table,
\eat{
The algorithm~\cmpwpct, over both datasets,
infers cascades with satisfiable structural
similarity \wrt the original cascades:}
for~\cmpwpct, the average $\precision_v$ is above $90\%$,
and the average $\precision_e$ is above $75\%$ over both datasets.
Better still, the results hold even when we set $\sigma$ = $85\%$.
For~\cmpwbct, $\precision_v$ and $\precision_e$ are above $65\%$ and above $40\%$, respectively.
For~\cmpwpct, $\precision_v$ and $\precision_e$ have almost consistent performance on both datasets; however, for~\cmpwbct, the $\precision_v$ and $\precision_e$ of the inferred Enron cascades are higher than those of the inferred retweet cascades. The gap might result from the different diffusion patterns between these two datasets: we observed that there are more than $70\%$ of cascades in the Enron dataset whose structures are contained in the \bfs directed acyclic graphs of~\cmpwbct, while in the Twitter Tweets there are less than $45\%$ of retweet cascades following the assumed graph structures of~\cmpwbct.

\etitle{Efficiency over real datasets}.
In all the tests over real datasets, ~\rpct,~\rbct,~\gpct and~\cmpwbct take
less than $1$ second.~\lpbct does not scale for retweet cascades, while ~\lppct does not scale for both datasets.
On the other hand, while~\cmpwpct takes less than $0.4$ seconds in inferring
all the Enron cascades, it takes less than $20$ seconds to infer Twitter cascades where
$d$=$4$, and $100$ seconds when $d$ = $5$. Indeed, for Twitter network the average
degree of the nodes is $20$, 
while the average degree for Enron dataset is $7$. As such, it takes more time for~\cmpwpct
to infer Twitter cascades in the denser Twitter network. In our tests, the efficiency of all the algorithms
are not sensitive \wrt the changes to $\sigma$.

\etitle{Efficiency and scalability over synthetic datasets}.
In the second set of experiments, we evaluated the efficiency and the scalability of our algorithms using synthetic cascades.

\stab
(a) We first evaluate the efficiency and scalability of~\cmpwpct and compare~\cmpwpct with~\rpct and~\gpct.

Fixing uncertainty $\sigma = 50\%$, we varied $|T|$ from $30$ to $240$. Fig.~\ref{fig-pcsize} shows that~\cmpwpct scales well with
the size of the cascade. Indeed, it only takes $2$ seconds to infer the cascades with $300$ nodes.

Fixing size $|T|=100$, we varied the uncertainty $\sigma$ from $0.25$ to $0.85$. Fig.~\ref{fig-punc} illustrates that while all the three algorithms are more efficient with larger $\sigma$,~\cmpwpct is more sensitive. All the three algorithms scale well with $\sigma$.

As~\lppct does not scale well, its performance is not shown in Fig.~\ref{fig-pcsize} and Fig.~\ref{fig-punc}.




\stab
(b) Using the same setting, we evaluated the performance of~\cmpwbct,
compared with~\lpbct and~\rbct.

Fixing $\sigma$ and varying $|T|$, the result is reported in Fig.~\ref{fig-bcsize}.
First,~\cmpwbct outperforms~\lpbct, and is almost
as efficient as the randomized algorithm~\rbct. For the cascade of
$240$ nodes,~\cmpwbct takes less than $0.5$ second to infer the structure,
while~\lpbct takes nearly $1000$ seconds. Second, while~\cmpwbct is not sensitive to the change of $|T|$,~\lpbct is much more sensitive.


Fixing $|T|$ and varying $\sigma$, Fig.~\ref{fig-bunc} shows the performance of the three algorithms. The figure tells us that~\cmpwbct and~\rbct are
 less sensitive to the change of $\sigma$ than~\lpbct. This is because~\cmpwbct and~\rbct identify
 bounded consistent tree by constructing shortest paths from the source to the observed nodes.
 When the maximum depth of the observation point is fixed, the total number of nodes and edges visited
 by~\cmpwbct and~\rbct are not sensitive to $\sigma$.



\stitle{Summary.} 
We can summarize the results as follows.
(a) Our inference algorithms can infer cascades effectively.
For example, the original cascades and the ones inferred by~\cmpwpct
have structural similarity (measured by $\precision_e$) of higher than $75\%$ in both real-life datasets.
(b) Our algorithms scale well with
the sizes of the cascades, and uncertainty. 
They seldom demonstrated their worst-case complexity. 
For example, even for cascades with $240$ nodes,
all of our algorithms take less than two seconds.

\vspace{-1ex}
\section{Conclusion}

\vspace{-1ex}
\begin{table}[tb!]
\vspace{-3ex}
\scriptsize
\begin{center}
\begin{tabular}{|c|c|c|c|c|}
    \hline
  Problem   & Complexity & Approximation & time\\ 
  \hline \hline
  \mbct  & \NP-c, \APX-hard & $|X|$ & $O(|E|)$    \\
    \hline 
   \mwbct  & \NP-c,\APX-hard & $|X|*\frac{\log f_{max}}{\log f_{min}}$ & $O(|E|))$ \\
    \hline 

   \mpct (\sp tree) &\NP-c, \APX-hard  & $d$ &  $O|V^3|$ \\
   \hline
   \mwpct (\sp tree) & \NP-c, \APX-hard  & $d*\frac{\log f_{max}}{\log f_{min}}$ & $O|V^3|$ \\
   \hline
   \hline
 \mpct   & \NP-c, \APX-hard & -- & $O(|t_{max}*|V|^3)$ \\
    \hline 
   \mwpct  & \NP-c, \APX-hard & -- & $O(|t_{max}*|V|^3)$ \\
   \hline
    \hline
\end{tabular}
\end{center}
\vspace{-3ex}
\caption{Summary: complexity and approximability}
\label{tab-results} 
\vspace{-6ex}
\end{table}
\normalsize

In this paper, we investigated cascade inference problem
based on partial observation.
We proposed the notions of consistent trees for capturing
the inferred cascades, namely, bounded consistent trees and
perfect consistent trees, as well as quantitative metrics by
minimizing either the size of the inferred structure or
maximizing the overall likelihood. We have established
the intractability and the hardness results for
the optimization problems as summarized in Table~\ref{tab-results}.
Despite the hardness, we developed approximation and heuristic algorithms for these problems,
with performance guarantees on inference quality,
We verified the effectiveness and efficiency of our
techniques using real life and synthetic cascades.
Our experimental results have shown that our methods are
able to efficiently and effectively infer the structure of information cascades.


\renewcommand{\baselinestretch}{0.9}
\bibliographystyle{abbrv}
\begin{small}
\bibliography{ref}
\end{small}

\end{document}